\theoremstyle{plain}
\newtheorem{theorem}{\textbf{Theorem}}
\newtheorem{proposition}[theorem]{\textbf{Proposition}}
\newcommand{\citet}[1]{\citeauthor{#1} \shortcite{#1}}
\def\A{\mathbf{A}}
\def\G{\mathbf{G}}
\def\I{\mathbf{I}}
\def\P{\mathbf{P}}
\def\Q{\mathbf{Q}}
\def\W{\mathbf{W}}
\def\e{\mathbf{e}}
\def\x{\mathbf{x}}
\def\y{\mathbf{y}}
\def\z{\mathbf{z}}
\def\btheta{\boldsymbol{\theta}}
\def\argmin#1{\underset{#1}{\textrm{argmin}}}
\def\minim#1{\underset{#1}{\textrm{min}}}
\title{Image-Adaptive GAN based Reconstruction}
\author{Shady Abu Hussein\thanks{The authors have contributed equally to this work. Code is available at https://github.com/shadyabh/IAGAN}, Tom Tirer$^*$, and Raja Giryes\\ \Large \\ 
School of Electrical Engineering\\ 
Tel Aviv University, Tel Aviv, Israel\\
}
\begin{document}

\maketitle

\begin{abstract}
In the recent years, there has been a significant improvement in the quality of samples produced by (deep) generative models such as variational auto-encoders and generative adversarial networks. However, the representation capabilities of these methods still do not capture the full distribution for complex classes of images, such as human faces.
This deficiency has been clearly observed in previous works that use pre-trained generative models to solve imaging inverse problems.
In this paper, we suggest to mitigate the limited representation capabilities of generators by making them image-adaptive and enforcing compliance of the restoration with the observations via back-projections.
We empirically demonstrate the advantages of our proposed approach for image super-resolution and compressed sensing.
\end{abstract}

\section{Introduction}

The developments in deep learning \cite{goodfellow2016deep} in the recent years have led to significant improvement in learning generative models.
Methods like variational auto-encoders (VAEs) \cite{kingma2013auto}, generative adversarial networks (GANs) \cite{goodfellow2014generative} and latent space optimizations (GLOs) \cite{bojanowski2018optimizing} have found success at modeling data distributions. However, for complex classes of images, such as human faces, while these methods can generate nice examples, their representation capabilities do not capture the full distribution. This phenomenon is sometimes referred to in the literature, especially in the context of GANs, as {\em mode collapse} \cite{arjovsky2017wasserstein,karras2017progressive}. Yet, as demonstrated in \cite{richardson2018gans}, it is common to other recent learning approaches as well.

Another line of works that has gained a lot from the developments in deep learning is imaging inverse problems, where the goal is to recover an image $\x$ from its degraded or compressed observations $\y$ \cite{bertero1998introduction}.
Most of these works have been focused on training a convolutional neural network (CNN) to learn the inverse mapping from $\y$ to $\x$ for a {\em specific} observation model (e.g. super-resolution with certain scale factor and bicubic anti-aliasing kernel \cite{dong2014learning}).
Yet, recent works have suggested to use neural networks for handling only the image prior in a way that does not require exhaustive offline training for each different observation model. This can be done by using CNN denoisers \cite{zhang2017learning,meinhardt2017learning,rick2017one} plugged into iterative optimization schemes \cite{venkatakrishnan2013plug,metzler2016denoising,tirer2019image}, training a neural network from scratch for the imaging task directly on the test image (based on internal recurrence of information inside a single image) \cite{ZSSR,ulyanov2017deep}, or using generative models \cite{bora2017compressed,yeh2017semantic,hand2018phase}.

Methods that use generative models as priors can only handle images that belong to the class or classes on which the model was trained. However, the generative learning equips them with valuable semantic information that other strategies lack. For example, a method which is not based on a generative model cannot produce a perceptually pleasing image of human face if the eyes are completely missing in an inpainting task \cite{yeh2017semantic}.
The main drawback in restoring complex images
using generative models is the limited representation capabilities of the generators. Even when one searches over the range of a pre-trained generator for an image which is closest to the original $\x$, he is expected to get a significant mismatch \cite{bora2017compressed}.

In this work, we propose a strategy to mitigate the limited representation capabilities of generators when solving inverse problems. The strategy is based on a gentle internal learning phase at test time, which essentially makes the generator image-adaptive while maintaining the useful information obtained in the offline training. In addition, in scenarios with low noise level, we propose to further improve the reconstruction by a back-projection step that strictly enforces compliance of the restoration with the observations $\y$.
We empirically demonstrate the advantages of our proposed approach for image super-resolution and compressed sensing.

\section{Related Work}

Our work is mostly related to the work by  \citet{bora2017compressed}, which have suggested to use pre-trained generative models for the compressive sensing (CS) task \cite{donoho2006compressed,candes2004robust}: reconstructing an unknown signal $\x \in \mathbb{R}^n$ from observations $\y \in \mathbb{R}^m$ of the form
\begin{align}
\label{Eq_model}
\y = \A \x + \e,
\end{align}
where $\A$ is an $m \times n$ measurement matrix, $\e \in \mathbb{R}^m$ represents the noise, and the number of measurements is much smaller than the ambient dimension of the signal, i.e. $m \ll n$.
Following the fact that in highly popular generative models (e.g. GANs, VAEs and GLOs) a generator $\G(\cdot)$ learns a mapping
from a low dimensional space $\z \in \mathbb{R}^k$ to the signal space $\G(\z) \subset \mathbb{R}^n$, the authors \cite{bora2017compressed} have proposed a method, termed CSGM, that estimates the signal as $\hat{\x}=\G(\hat{\z})$, where $\hat{\z}$ is obtained by minimizing the non-convex\footnote{The function $f(\z)$ is non-convex due to the non-convexity of $\G(\z)$.} cost function
\begin{align}
\label{Eq_cost_func1}
f(\z) = \| \y-\A \G(\z) \|_2^2,
\end{align}
using backpropagation and standard gradient based optimizers.

For specific classes of images, such as handwritten digits and human faces, the experiments in \cite{bora2017compressed,hand2018phase} have shown that using learned generative models enables to reconstruct nice looking images with much fewer measurements than methods that use non-generative (e.g. model-based) priors.
However, unlike the latter, it has been also shown that CSGM and its variants cannot provide accurate recovery even when there is no noise and the number of observations is very large. This shortcoming is mainly due to the limited representation capabilities of the generative models (see Section 6.3 in \cite{bora2017compressed}), and is common to related recent works \cite{hand2018phase,bora2018ambientgan,dhar2018modeling,shah2018solving}.

Note that using specific structures of $\A$, the model \eqref{Eq_model} can be used for different imaging inverse problems,
making the CSGM method applicable for these problems as well.
For example, it can be used for denoising task when $\A$ is the $n \times n$ identity matrix $\I_n$, inpainting task when $\A$ is an $m \times n$ sampling matrix (i.e. a selection of $m$ rows of $\I_n$), deblurring task when $\A$ is a blurring operator, and super-resolution task if $\A$ is a composite operator of blurring (i.e. anti-aliasing filtering) and down-sampling.

Our image-adaptive approach is inspired by \cite{tirer2018super}, which is influenced itself by \cite{ZSSR,ulyanov2017deep}. These works follow the idea of internal recurrence of information inside a single image within and across scales \cite{glasner2009super}.
However, while the two methods \cite{ZSSR,ulyanov2017deep} completely avoid an offline training phase and optimize the weights of a deep neural network only in the test phase, the work in \cite{tirer2018super} incorporates external and internal learning by taking offline trained CNN denoisers, fine-tuning them in test time and then plugging them into a model-based optimization scheme \cite{tirer2019image}.
Note, though, that the internal learning phase in \cite{tirer2018super} uses patches from $\y$ as the ground truth for a denoising loss function ($f(\tilde{\x}) = \| \y-\tilde{\x} \|_2^2$),
building on the assumption that $\y$ directly includes patterns which recur also in $\x$. Therefore, this method requires that $\y$ is not very degraded, which makes it suitable perhaps only for the super-resolution task, similarly to \cite{ZSSR}, which is also restricted to this problem.

Note that the method in \cite{ulyanov2017deep}, termed as deep image prior (DIP), can be applied to different observation models.
However, the advantage of our approach stems from the offline generative learning that captures valuable semantic information that DIP lacks. As mentioned above, a method like DIP, which is not based on a generative model, cannot produce a perceptually pleasing image of human face if the eyes are completely missing in an inpainting task \cite{yeh2017semantic}.
In this paper, we demonstrate that this advantage holds also for highly ill-posed scenarios in image super-resolution and compressed sensing.
In addition, note that the DIP approach typically works only with huge U-Nets like architectures that need to be modified for each inverse problem and require much more memory than common generators. Indeed, we struggled (GPU memory overflow, long run-time) to apply DIP to the $1024 \times 1024$ images of CelebA-HQ dataset \cite{karras2017progressive}.

\section{The Proposed Method}
\label{sec_proposed}

In this work, our goal is to make the solutions of inverse problems using generative models more faithful to the observations and more accurate, despite the limited representation capabilities of the pre-trained generators.
To this end, we propose an image-adaptive approach, whose motivation is explained both verbally and mathematically (building on theoretical results from \cite{bora2017compressed}). We also discuss a back-projection post-processing step that can further improve the results for scenarios with low noise level. While this post-processing, typically, only moderately improves the results of model-based super-resolution algorithms \cite{glasner2009super,yang2010image}, we will show that it is highly effective for generative priors.
To the best of our knowledge, we are the first to use it in reconstructions based on generative priors.

\subsection{An Image-Adaptive Approach}

We propose to handle the limited representation capabilities of the generators by making them image-adaptive (IA) using internal learning in test-time.
In details, instead of recovering the latent signal $\x$ as $\hat{\x}=\G(\hat{\z})$, where $\G(\cdot)$ is a pre-trained generator and $\hat{\z}$ is a minimizer of \eqref{Eq_cost_func1}, we suggest to simultaneously optimize $\z$ and the parameters of the generator, denoted as $\btheta$, by minimizing the cost function
\begin{align}
\label{Eq_cost_func_IA}
f_{IA}(\btheta,\z) = \| \y-\A \G_{\btheta}(\z) \|_2^2.
\end{align}
The optimization is done using backpropagation and standard gradient based optimizers.
The initial value of $\btheta$ is the pre-trained weights, and the initial value of $\z$ is $\hat{\z}$, obtained by minimization with respect to $\z$ alone, as done in CSGM.
Then, we perform joint-minimization to obtain $\hat{\btheta}_{IA}$ and $\hat{\z}_{IA}$, and recover the signal using $\hat{\x}_{IA}=\G_{\hat{\btheta}_{IA}}(\hat{\z}_{IA})$.

The rationale behind our approach can be explained as follows.
Current leading learning strategies cannot train a generator whose representation range covers {\em every} sample of a complex distribution, thus, optimizing only $\z$ is not enough.
However, the expressive power of deep neural networks (given by optimizing the weights $\btheta$ as well) allows to create a {\em single} specific sample that agrees with the observations $\y$.
Yet, contrary to prior works that optimize the weights of neural networks only by internal learning \cite{ZSSR,ulyanov2017deep}, here we incorporate information captured in the test-time with the valuable semantic knowledge obtained by the offline generative learning.

To make sure that the information captured in test-time does not come at the expense of offline information which is useful for the {\em test image at hand},
we start with optimizing $\z$ alone, as mentioned above,
and then apply the joint minimization
with a small learning rate and early stopping (details in the experiments section below).

\subsection{Mathematical Motivation for Image-Adaptation}

To motivate the image-adaptive approach, let us consider an $L$-layer neural network generator
\begin{align}
\label{Eq_generator}
\G(\z; \{\W_{\ell}\}_{\ell=1}^L) = \W_L \sigma ( \W_{L-1} \sigma ( \ldots \sigma ( \W_1 \z) \ldots ) ),
\end{align}
where $\sigma(\cdot)$ denotes element-wise ReLU activation, and $\W_\ell \in \mathbb{R}^{k_\ell \times k_{\ell-1}}$ such that
$k_L=n$.
Recall that typically $k_0 < k_1 < \ldots < k_L$ (as $k_0 \ll n$).
The following theorem, which has been proven in \cite{bora2017compressed} (Theorem 1.1 there), provides an upper bound on the reconstruction error.

\begin{theorem}
\label{theorem1}
Let $\G(\z):\mathbb{R}^k \rightarrow \mathbb{R}^n$ as given in \eqref{Eq_generator}, $\A \in \mathbb{R}^{m \times n}$ with $A_{ij}\sim \mathcal{N}(0,1/m)$, $m=\Omega \left ( kL\mathrm{log}n \right )$, and $\y=\A\x+\e$.
Let $\hat{\z}$ minimize $\| \y-\A \G(\z) \|_2$ to within additive $\epsilon$ of the optimum.
Then, with probability $1-\mathrm{e}^{-\Omega(m)}$ we have
\begin{equation}
\label{Eq_theorem1}
\| \G(\hat{\z}) - \x \|_2 \leq 6 E_{rep}(\G(\cdot),\x) + 3\|\e\|_2 + 2\epsilon,
\end{equation}
where $E_{rep}(\G(\cdot),\x) := \minim{\z}\|\G(\z)-\x\|_2$.
\end{theorem}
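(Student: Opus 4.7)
The plan is to establish the bound by combining two ingredients: (i) a Set-Restricted Eigenvalue Condition (S-REC) asserting that $\A$ approximately preserves distances on $\mathrm{range}(\G)$, and (ii) the near-optimality of $\hat{\z}$ supplied by the optimization procedure. Let $\z^*$ attain (up to arbitrary precision) $\minim{\z}\|\G(\z)-\x\|_2$ and set $\x^*:=\G(\z^*)$, so $\|\x^*-\x\|_2 = E_{rep}(\G(\cdot),\x)$. In addition to the S-REC, I will use the easier fact that, with high probability, $\|\A(\x-\x^*)\|_2 \leq 2\|\x-\x^*\|_2 = 2E_{rep}$ (standard single-vector Gaussian concentration, since $\x-\x^*$ is a \emph{fixed} direction independent of $\A$ up to the net argument below).

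The main obstacle is proving the S-REC: with probability at least $1-e^{-\Omega(m)}$, for every $\z_1,\z_2\in\Reb^k$,
$$\|\A(\G(\z_1)-\G(\z_2))\|_2 \geq \gamma\,\|\G(\z_1)-\G(\z_2)\|_2 - \delta,$$
for some constant $\gamma$ bounded away from $0$ and an arbitrarily small $\delta$. To establish this I would exploit the piecewise-linear structure of ReLU networks: the input space $\Reb^k$ decomposes into at most $N \leq \prod_{\ell=1}^{L}(e k_\ell/k)^{k}$ polyhedral regions (a Zaslavsky-type count of the hyperplane arrangements induced layer-by-layer), and on each region $\G$ is affine, so its image lies in an affine subspace of dimension at most $k$. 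Covering each such piece by a Euclidean $\delta$-net of size $(C/\delta)^k$ yields a global net whose log-cardinality is $O(kL\log n + k\log(1/\delta))$. Gaussian concentration gives $(1\pm\gamma')$-preservation of $\|\A\w\|_2$ for any single fixed $\w$ with failure probability $e^{-\Omega(m\gamma'^{2})}$; a union bound over all pairs from the net, together with the hypothesis $m=\Omega(kL\log n)$, makes the condition hold on the net, and a standard triangle-inequality argument extends it to all of $\mathrm{range}(\G)$ at the cost of the additive $\delta$.

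Once these two probabilistic ingredients are in place, the remainder is routine triangle-inequality bookkeeping. Near-optimality gives
$$\|\y-\A\G(\hat{\z})\|_2 \leq \|\y-\A\x^*\|_2 + \epsilon \leq \|\e\|_2 + \|\A(\x-\x^*)\|_2 + \epsilon \leq \|\e\|_2 + 2E_{rep} + \epsilon,$$
and combining with $\|\y-\A\x\|_2=\|\e\|_2$ and the bound on $\|\A(\x-\x^*)\|_2$ yields
$$\|\A(\G(\hat{\z})-\x^*)\|_2 \leq 4E_{rep} + 2\|\e\|_2 + \epsilon.$$
Since both $\G(\hat{\z})$ and $\x^*$ lie in $\mathrm{range}(\G)$, the S-REC converts this to $\|\G(\hat{\z})-\x^*\|_2 \leq \gamma^{-1}(4E_{rep}+2\|\e\|_2+\epsilon+\delta)$, and a final triangle inequality against $\|\x^*-\x\|_2 = E_{rep}$ produces a bound of the form $c_1 E_{rep}+c_2\|\e\|_2+c_3\epsilon$; tuning $\gamma'$, $\delta$ close to their ideal limits (which only changes the required constant in $m=\Omega(kL\log n)$) drives the constants to the stated $(6,3,2)$.
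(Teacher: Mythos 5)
First, a point of reference: the paper does not prove this theorem itself --- it is imported verbatim as Theorem~1.1 of \citet{bora2017compressed} --- so your proposal should be measured against the proof in that reference. Measured that way, you have reconstructed essentially the same argument: the Set-Restricted Eigenvalue Condition obtained by counting the linear pieces of the ReLU network and union-bounding Gaussian concentration over a net (their Lemma~4.2), the single-fixed-vector bound $\|\A(\x-\x^*)\|_2 \leq 2\|\x-\x^*\|_2$, and the same chain of triangle inequalities (your bookkeeping, with $4/\gamma+1\leq 6$, $2/\gamma\leq 3$, $1/\gamma\leq 2$ satisfied for $\gamma\geq 4/5$, is exactly their Lemma~4.3).

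There is, however, one genuine flaw in how you establish the S-REC. The linear pieces of $\G$ are unbounded polyhedral cones (the generator in \eqref{Eq_generator} has no bias terms), so they admit no finite Euclidean $\delta$-net; and even after restricting to a bounded region, the additive $\delta$ you are left with cannot be driven to zero at fixed $m$, since your net size grows as $(C/\delta)^k$ --- yet the stated bound \eqref{Eq_theorem1} carries no residual $\delta$ term. The repair, which is what the cited proof does, is to note that on each piece $\G$ is linear rather than merely affine, so the image of each piece lies in a linear subspace of dimension at most $k$, and the difference of any two points of the range lies in one of at most $N^2$ subspaces of dimension at most $2k$. It then suffices to place an $O(1)$-net on the unit sphere of each such subspace and extend by homogeneity, which gives the S-REC with $\delta=0$ and hence the clean constants $(6,3,2)$. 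With that substitution (and the routine upper bound $\|\A\w\|_2\leq(1+\gamma')\|\w\|_2$ on the net, which you need to pass from the net to the full range), your argument is complete.
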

Note that $E_{rep}(\G(\cdot),\x)$ is in fact the representation error of the generator for the specific image $\x$.
This term has been empirically observed in \cite{bora2017compressed} to dominate the overall error, e.g. more than the error of the optimization algorithm (represented by $\epsilon$).
The following proposition builds on Theorem \ref{theorem1} and motivates the joint optimization of $\z$ and $\W_1$ by guaranteeing a decreased representation error term.

\begin{proposition}
\label{prop2}
Consider the generator defined in \eqref{Eq_generator} with $k_0 < k_1$, $\A \in \mathbb{R}^{m \times n}$ with $A_{ij}\sim \mathcal{N}(0,1/m)$, $m=\Omega \left ( k_1 L\mathrm{log}n \right )$, and $\y=\A\x+\e$. Let $\hat{\z}$ and $\hat{\W}_1$ minimize $\tilde{f}(\z,\W_1) = \| \y-\A \G(\z; \{\W_{\ell}\}_{\ell=1}^L) \|_2$ to within additive $\epsilon$ of the optimum.
Then, with probability $1-\mathrm{e}^{-\Omega(m)}$ we have
\begin{equation}
\label{Eq_prop2}
\| \G(\hat{\z}; \hat{\W}_1, \{\W_{\ell}\}_{\ell=2}^L) - \x \|_2 \leq 6 \tilde{E}_{rep} + 3\|\e\|_2 + 2\epsilon,
\end{equation}
where $\tilde{E}_{rep} \leq E_{rep}(\G(\cdot),\x)$.
\end{proposition}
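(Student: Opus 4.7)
The plan is to reduce Proposition~\ref{prop2} to a direct invocation of Theorem~\ref{theorem1} applied to a suitably reparametrized generator whose input dimension is $k_1$ rather than $k_0$. Concretely, I would introduce the auxiliary $L$-layer network
$$\bar{\G}(\v;\{\W_\ell\}_{\ell=2}^L) := \W_L\,\sigma(\W_{L-1}\,\sigma(\ldots\sigma(\W_2\,\sigma(\v))\ldots))$$
with input $\v\in\mathbb{R}^{k_1}$. This fits the form of \eqref{Eq_generator} with depth $L$, input dimension $k_1$, and first linear layer fixed to the identity $\I_{k_1}$, so Theorem~\ref{theorem1} applies verbatim to $\bar{\G}$ under the hypothesis $m=\Omega(k_1 L\log n)$ already imposed in the proposition.

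The structural fact that drives the reduction is the identity
$\G(\z;\W_1,\W_2,\ldots,\W_L)=\bar{\G}(\W_1\z;\W_2,\ldots,\W_L)$,
which is immediate by unrolling the definitions. Because $k_0\geq 1$, every $\v\in\mathbb{R}^{k_1}$ can be realized as $\W_1\z$ for some admissible pair $(\z,\W_1)$ (take any nonzero $\z$ and set $\W_1:=\v\z^\top/\|\z\|_2^2$), so the image sets $\{\G(\z;\W_1,\ldots,\W_L):(\z,\W_1)\}$ and $\{\bar{\G}(\v):\v\in\mathbb{R}^{k_1}\}$ coincide. In particular, the two objectives $\|\y-\A\G(\z;\W_1,\ldots)\|_2$ and $\|\y-\A\bar{\G}(\v)\|_2$ have the same infimum, and any $\epsilon$-approximate joint minimizer $(\hat{\z},\hat{\W}_1)$ of the first problem yields $\hat{\v}:=\hat{\W}_1\hat{\z}$, an $\epsilon$-approximate minimizer of the second, with matching output $\bar{\G}(\hat{\v})=\G(\hat{\z};\hat{\W}_1,\{\W_\ell\}_{\ell=2}^L)$.

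Applying Theorem~\ref{theorem1} to $\bar{\G}$ at $\hat{\v}$ then yields, with probability $1-e^{-\Omega(m)}$,
$$\|\bar{\G}(\hat{\v})-\x\|_2 \,\leq\, 6\,E_{rep}(\bar{\G}(\cdot),\x) + 3\|\e\|_2 + 2\epsilon,$$
which is exactly \eqref{Eq_prop2} after identifying $\tilde{E}_{rep}:=E_{rep}(\bar{\G}(\cdot),\x)$. The remaining inequality $\tilde{E}_{rep}\leq E_{rep}(\G(\cdot),\x)$ follows from the same identity: the range of the original generator (with $\W_1$ held fixed at its pre-trained value) equals $\{\bar{\G}(\W_1\z):\z\in\mathbb{R}^{k_0}\}$, which is contained in $\{\bar{\G}(\v):\v\in\mathbb{R}^{k_1}\}$, so the nearest point to $\x$ in the larger set can only be closer or equally close.

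The only delicate point is the book-keeping for the depth and effective input dimension when feeding $\bar{\G}$ into Theorem~\ref{theorem1}, which is why the sample complexity has been strengthened from $\Omega(k_0 L\log n)$ to $\Omega(k_1 L\log n)$. Everything else is transparent, since the probabilistic content of Theorem~\ref{theorem1} lives entirely in the Gaussian matrix $\A$ and is oblivious to the specific (deterministic) weights that parameterize the generator; no essentially new inequality should be needed beyond those already invoked in its proof.
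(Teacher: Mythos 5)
Your proposal is correct and follows essentially the same route as the paper's proof: absorb $\hat{\W}_1\hat{\z}$ into a new latent variable $\hat{\v}\in\mathbb{R}^{k_1}$, view the remaining layers (with an identity first layer) as a generator of the form \eqref{Eq_generator} with input dimension $k_1$, apply Theorem~\ref{theorem1} to it, and note that the original generator's range is contained in the reparametrized one's, giving $\tilde{E}_{rep}\leq E_{rep}(\G(\cdot),\x)$. The only difference is that you explicitly verify that the two optimization problems share the same optimum (via the surjectivity of $(\z,\W_1)\mapsto\W_1\z$ onto $\mathbb{R}^{k_1}$), a point the paper passes over more tersely; this is a welcome bit of added care, not a different argument.
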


\begin{proof}

Define $\hat{\tilde{\z}}:=\hat{\W}_1 \hat{\z}$ and $\tilde{\G}(\tilde{\z}) := \W_L \sigma ( \W_{L-1} \sigma ( \ldots \sigma ( \W_2 \sigma ( \I_{k_1} \tilde{\z}) ) \ldots ) )$.
Note that $\G(\hat{\z}; \hat{\W}_1, \{\W_{\ell}\}_{\ell=2}^L) = \tilde{\G}(\hat{\tilde{\z}})$, therefore $\hat{\tilde{\z}}$ minimize $\| \y-\A \tilde{\G}(\tilde{\z}) \|_2$ to within additive $\epsilon$ of the optimum.
Applying Theorem \ref{theorem1} on $\tilde{\G}(\tilde{\z})$ and $\hat{\tilde{\z}}$ we have
\begin{align}
\label{Eq_prop2_1}
\| \tilde{\G}(\hat{\tilde{\z}}) - \x \|_2 \leq 6 E_{rep}(\tilde{\G}(\cdot), \x) + 3\|\e\|_2 + 2\epsilon,
\end{align}
with the advertised probability.
Now, note that
\begin{align}
\label{Eq_prop2_2}
& E_{rep}(\tilde{\G}(\cdot),\x) \nonumber \\
& = \minim{\tilde{\z}\in\mathbb{R}^{k_1}}\|\W_L \sigma ( \W_{L-1} \sigma ( \ldots \sigma ( \W_2 \sigma (\I_{k_1} \tilde{\z})) \ldots ) ) - \x \|_2  \nonumber \\
&\leq \minim{\z\in\mathbb{R}^{k_0}}\|\W_L \sigma ( \W_{L-1} \sigma ( \ldots \sigma ( \W_2 \sigma (\W_1 \z)) \ldots ) ) - \x \|_2  \nonumber \\
&= E_{rep}(\G(\cdot),\x),
\end{align}
where the inequality follows from $\W_1\mathbb{R}^{k_0} \subset \mathbb{R}^{k_1}$.
We finish with substituting $\tilde{\G}(\hat{\tilde{\z}}) = \G(\hat{\z}; \hat{\W}_1, \{\W_{\ell}\}_{\ell=2}^L)$ in \eqref{Eq_prop2_1} and defining $\tilde{E}_{rep} := E_{rep}(\tilde{\G}(\cdot),\x)$.

\end{proof}

Proposition \ref{prop2} shows that under the mathematical framework of Theorem \ref{theorem1}, and under the (reasonable) assumption that the output dimension of the first layer is larger than its input, it is possible to {\em further} reduce the representation error of the generator for $\x$ (the term that empirically dominates the overall error) by optimizing the weights of the first layer as well. The result follows from obtaining an increased set in which the nearest neighbor of $\x$ is searched.

Note that if $k_0 < k_1 < \ldots < k_L$, then the procedure which is described in Proposition \ref{prop2} can be repeated sequentially layer after layer to further reduce the representation error.
However, note that this theory loses its meaningfulness at high layers because $m=\Omega \left ( k_\ell L\mathrm{log}n \right )$ approaches $\Omega \left ( n \right )$ (so no prior is necessary). Yet, it presents a motivation to optimize all the weights, as we suggest to do in practice.

\begin{figure*}[t]
  \centering
  \begin{subfigure}[b]{0.18\linewidth}
    \centering\includegraphics[width=66pt]{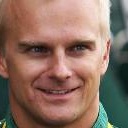}
    \\ \vspace{12mm}
  \end{subfigure}%
  \begin{subfigure}[b]{0.18\linewidth}
    \centering\includegraphics[width=66pt]{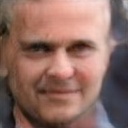}
    \\ \vspace{1mm}
    \centering\includegraphics[width=66pt]{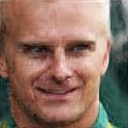}
  \end{subfigure}
    \begin{subfigure}[b]{0.18\linewidth}
    \centering\includegraphics[width=66pt]{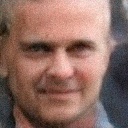}
    \\ \vspace{1mm}
    \centering\includegraphics[width=66pt]{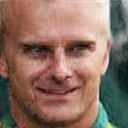}
  \end{subfigure}
    \begin{subfigure}[b]{0.18\linewidth}
    \centering\includegraphics[width=66pt]{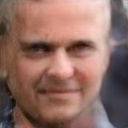}
    \\ \vspace{1mm}
    \centering\includegraphics[width=66pt]{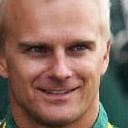}
  \end{subfigure}
    \begin{subfigure}[b]{0.18\linewidth}
    \centering\includegraphics[width=66pt]{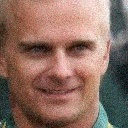}
    \\ \vspace{1mm}
    \centering\includegraphics[width=66pt]{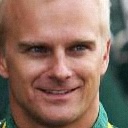}
  \end{subfigure}

  \caption{Compressed sensing with Gaussian measurement matrix using BEGAN. From left to right and top to bottom: original image, CSGM for $m/n=0.122$, CSGM-BP for $m/n=0.122$, CSGM for $m/n=0.61$, CSGM-BP for $m/n=0.61$, IAGAN for $m/n=0.122$, IAGAN-BP for $m/n=0.122$, IAGAN for $m/n=0.61$, IAGAN-BP for $m/n=0.61$.}
\label{fig:cs_mse_vs_m_visual}
\end{figure*}

\subsection{"Hard" vs. "Soft" Compliance to Observations}


The image-adaptive approach improves the agreement between the recovery and the observations. We turn now to describe another 
complementary way to achieve this goal.

Denote by $\hat{\x}$ an estimation of $\x$, e.g. using CSGM method or our IA approach.
Assuming that there is no noise, i.e. $\e=0$, a simple post-processing to strictly enforce compliance of the restoration with the observations $\y$  
is back-projecting
(BP) the estimator $\hat{\x}$ onto the affine subspace $\{ \A \mathbb{R}^n = \y \}$
\begin{align}
\label{Eq_cost_func_csgm_proj}
\hat{\x}_{bp} &= \argmin{\tilde{\x}} \,\, \| \tilde{\x}-\hat{\x} \|_2^2  \,\,\,\, \textrm{s.t.} \,\,\,\, \A\tilde{\x}= \y. 
\end{align}
Note that this problem has a closed-form solution
\begin{align}
\label{Eq_csgm_proj}
\hat{\x}_{bp} &= \A^\dagger\y + (\I_n - \A^\dagger\A) \hat{\x} \nonumber  \\
&= \A^\dagger (\y - \A \hat{\x}) + \hat{\x},
\end{align}
where $\A^\dagger := \A^T(\A\A^T)^{-1}$ is the pseudoinverse of $\A$ (assuming that $m < n$, which is the common case, e.g. in super-resolution and compressed sensing tasks).
In practical cases, where the problem dimensions are high, the matrix inversion in $\A^\dagger$ can be avoided by using the conjugate gradient method 
\cite{hestenes1952methods}.
Note that when $\y$ is noisy, the operation $\A^\dagger\y$ in \eqref{Eq_csgm_proj} is expected to amplify the noise. Therefore, the BP post-processing is useful as long as the noise level is low.

\begin{figure}
    \centering\includegraphics[width=220pt]{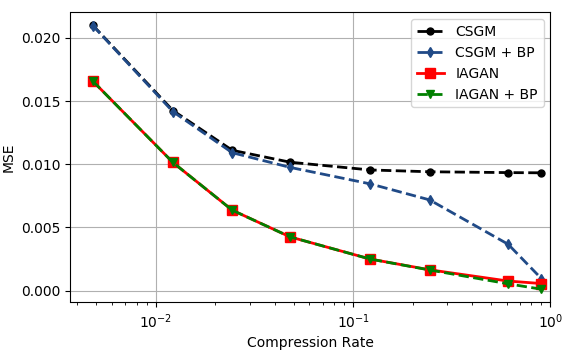}
  \caption{Compressed sensing with Gaussian measurement matrix using BEGAN. Reconstruction MSE (averaged over 100 images from CelebA) vs. the compression ratio $m/n$.}
\label{fig:cs_mse_vs_m}
\end{figure}

Let $\P_A := \A^\dagger \A$ denote the orthogonal projection onto the row space of $\A$, and $\Q_A := \I_n - \A^\dagger \A$ denote its orthogonal complement.
Substituting \eqref{Eq_model} into \eqref{Eq_csgm_proj} gives
\begin{align}
\label{Eq_csgm_proj_b}
\hat{\x}_{bp} = \P_A \x + \Q_A \hat{\x} + \A^\dagger \e,
\end{align}
which shows that $\hat{\x}_{bp}$ is consistent with $\y$ on $\P_A \x$ (i.e. displays {\em "hard" compliance}), and considers only the projection of $\hat{\x}$ onto the null space of $\A$.
Therefore, for an estimate $\hat{\x}$ obtained via a generative model,
the BP technique essentially {\em eliminates} the component of the generator's representation error that resides in the row space of $\A$, but does not change at all the component in the null space of $\A$.
Still, from the (Euclidean) accuracy point of view, this strategy is very effective at low noise levels, as demonstrated in the experiments section.

Interestingly, note that our image-adaptive strategy enforces only a {\em "soft" compliance} of the restoration with the observations $\y$, because our gentle joint optimization (which prevents overriding the offline semantic information) may not completely diminish the component of the generator's representation error that resides in the row space of $\A$,
as done by BP.
On the other hand, intuitively, the strong prior (imposed by the offline training and by the generator's structure) is expected to improve the restoration also in the null space of $\A$ (unlike BP).
Indeed, as shown below, by combining the two approaches, i.e. applying the IA phase and then the BP on $\hat{\x}_{IA}$, we obtain better results than only applying BP on CSGM. This obviously implies decreasing the component of reconstruction error in the null space of $\A$.

\section{Experiments}

In our experiments we use two recently proposed GAN models, which are known to generate very high quality samples of human faces.
The first is BEGAN \cite{berthelot2017began}, trained on CelebA dataset \cite{liu2015deep}, which generates a $128 \times 128$ image from a uniform random vector $\z \in \mathbb{R}^{64}$.
The second is PGGAN \cite{karras2017progressive}, trained on CelebA-HQ dataset \cite{karras2017progressive} that generates a $1024 \times 1024$ image from a Gaussian random vector $\z \in \mathbb{R}^{512}$.
We use the official pre-trained models,
and for details on the models and their training procedures  we refer the reader to the original publications.
Note that previous works, which use generative models for solving inverse problems, have considered much simpler datasets, such as MNIST \cite{lecun1998gradient} or a small version of CelebA (downscaled to size $64 \times 64$), which perhaps do not demonstrate how severe the effect of mode collapse is.

The test-time procedure is done as follows, and is almost the same for the two models.
For CSGM we follow \cite{bora2017compressed} and optimize \eqref{Eq_cost_func1} using ADAM optimizer \cite{kingma2014adam} with learning rate (LR) of 0.1. We use 1600 iterations for BEGAN and 1800 iterations for PGGAN. The final $\z$, i.e. $\hat{\z}$, is chosen to be the one with minimal objective value $f(\z)$ along the iterations, and the CSGM recovery is $\hat{\x}=\G(\hat{\z})$.
Performing a post-processing BP step \eqref{Eq_csgm_proj} 
gives us also a reconstruction that we denote by CSGM-BP.

In the reconstruction based on image-adaptive GANs, which we denote by IAGAN, we initialize $\z$ with $\hat{\z}$, and then optimize \eqref{Eq_cost_func_IA} jointly for $\z$ and $\btheta$ (the generator parameters).
For BEGAN we use LR of $10^{-4}$ for both $\z$ and $\btheta$ in all scenarios, and for PGGAN we use LR of $10^{-4}$ and $10^{-3}$ for $\z$ and $\btheta$, respectively. For BEGAN, we use 600 iterations for compressed sensing and 500 for super-resolution. For PGGAN we use 500 and 300 iterations for compressed sensing and super-resolution, respectively.
In the examined noisy scenarios we use only half of the amount of iterations, to avoid overfitting the noise.
The final minimizers $\hat{\btheta}_{IA}$ and $\hat{\z}_{IA}$ are chosen according to the minimal objective value, 
and the IAGAN result is obtained by $\hat{\x}_{IA}=\G_{\hat{\btheta}_{IA}}(\hat{\z}_{IA})$.
Another recovery, which uses also the post-processing BP step \eqref{Eq_csgm_proj} 
on $\hat{\x}_{IA}$, is denoted by IAGAN-BP.

We also compare the methods with DIP \cite{ulyanov2017deep}. We use DIP official implementation for the noiseless scenarios, and for the examined noisy scenarios we reduce the number of iterations by a factor of 4 (tuned for best average performance) to prevent the network from overfitting the noise.

We compare the performance of the different methods using two quantitative measures. The first one is the widely-used mean squared error (MSE) 
(sometimes in its PSNR form\footnote{\textbf{We compute the average PSNR as $10\mathrm{log}_{10}(255^2/\overline{\mathrm{MSE}})$, where $\overline{\mathrm{MSE}}$ is averaged over the test images.}}).
The second is a distance between images that focuses on perceptual similarity (PS), which has been proposed in \cite{zhang2018unreasonable} (we use the official implementation).
Displaying the PS is important since it is well known that PSNR/MSE may not correlate with the visual/perceptual quality of the reconstruction.
Note that in the PS score --- lower is better.

\begin{figure}[t]
  \centering
  \begin{subfigure}[b]{0.32\linewidth}
    \centering\includegraphics[width=75pt]{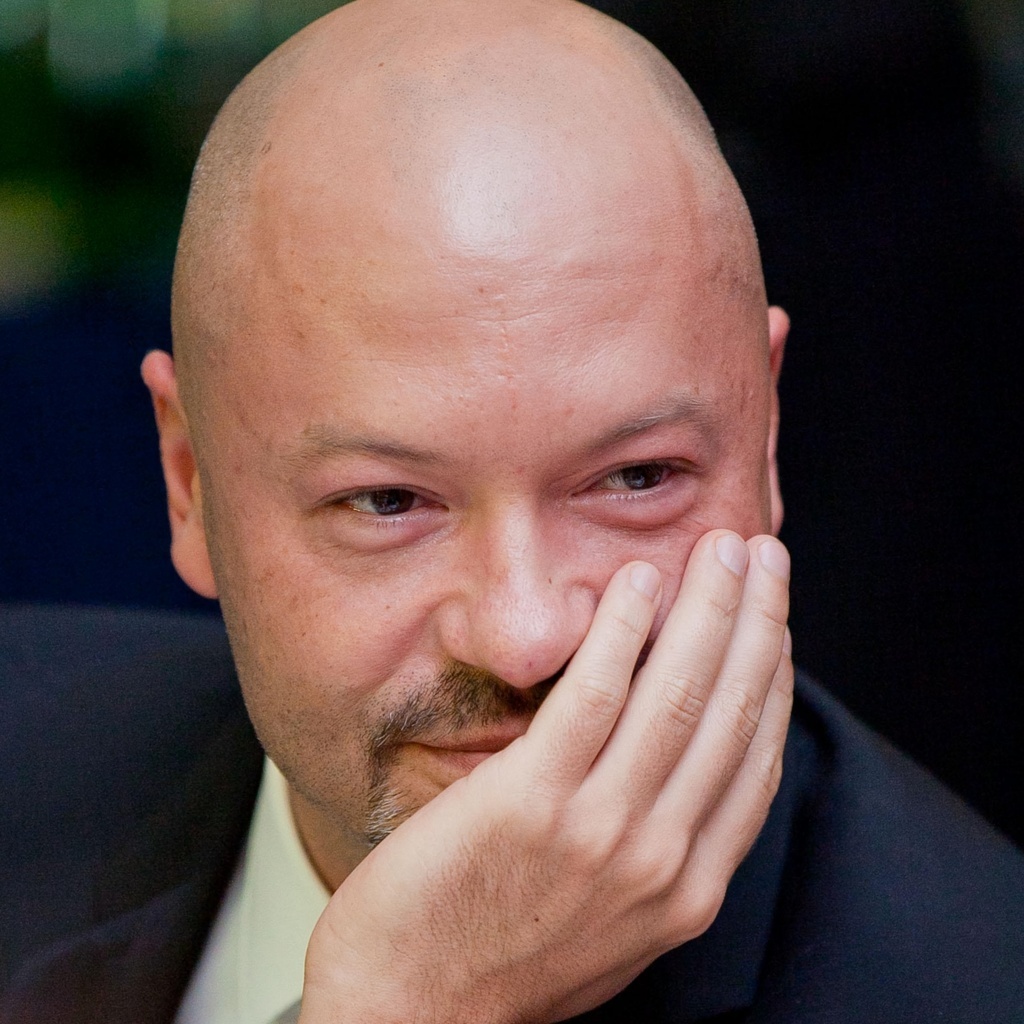}
  \end{subfigure}
  \begin{subfigure}[b]{0.32\linewidth}
    \centering\includegraphics[width=75pt]{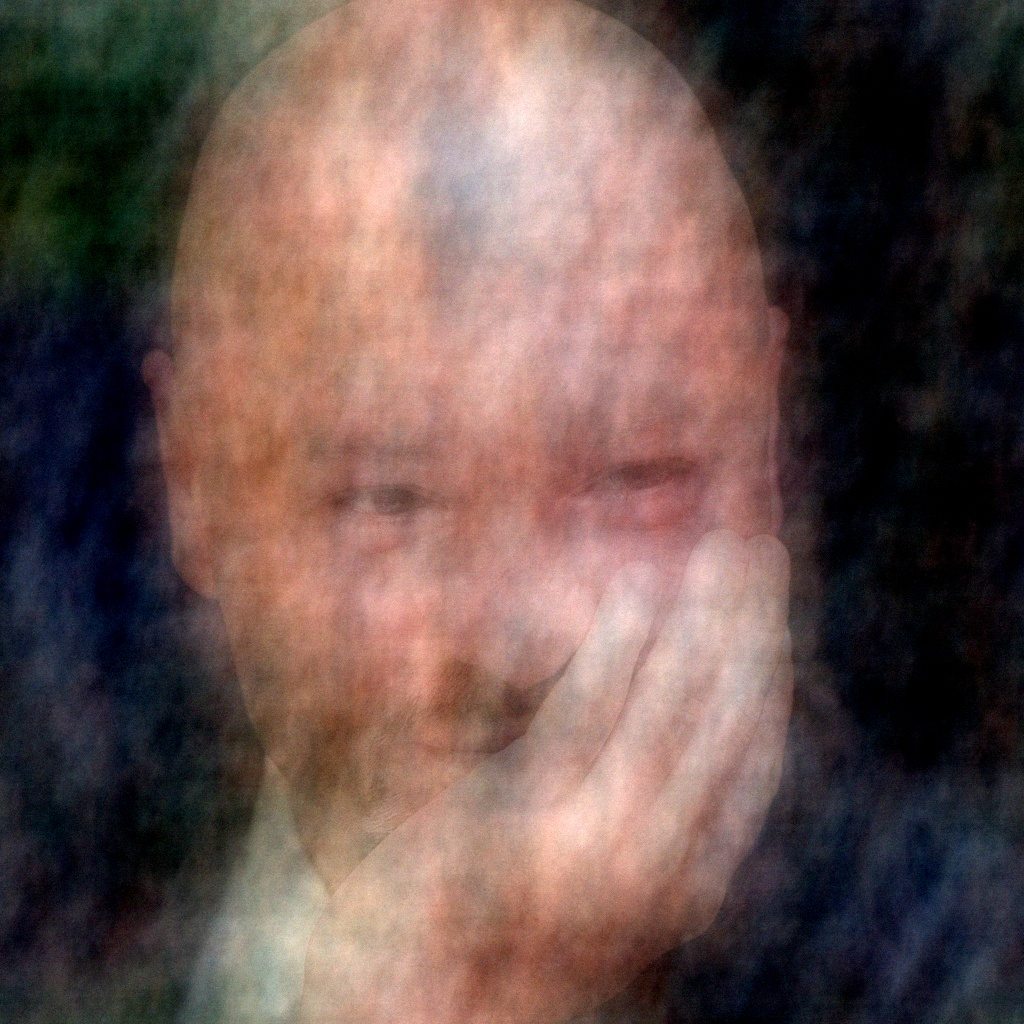}
  \end{subfigure}
  \\
    \begin{subfigure}[b]{0.32\linewidth}
    \centering\includegraphics[width=75pt]{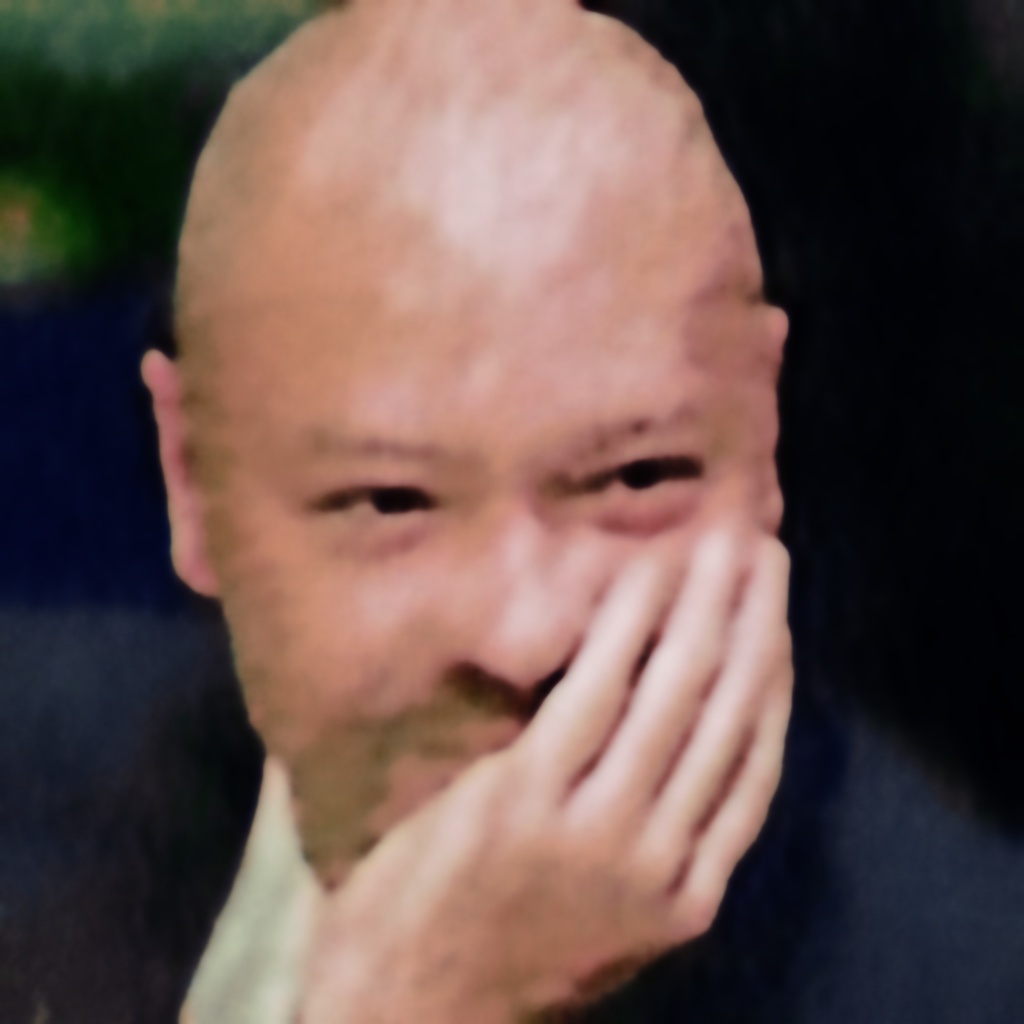}
  \end{subfigure}
  \begin{subfigure}[b]{0.32\linewidth}
    \centering\includegraphics[width=75pt]{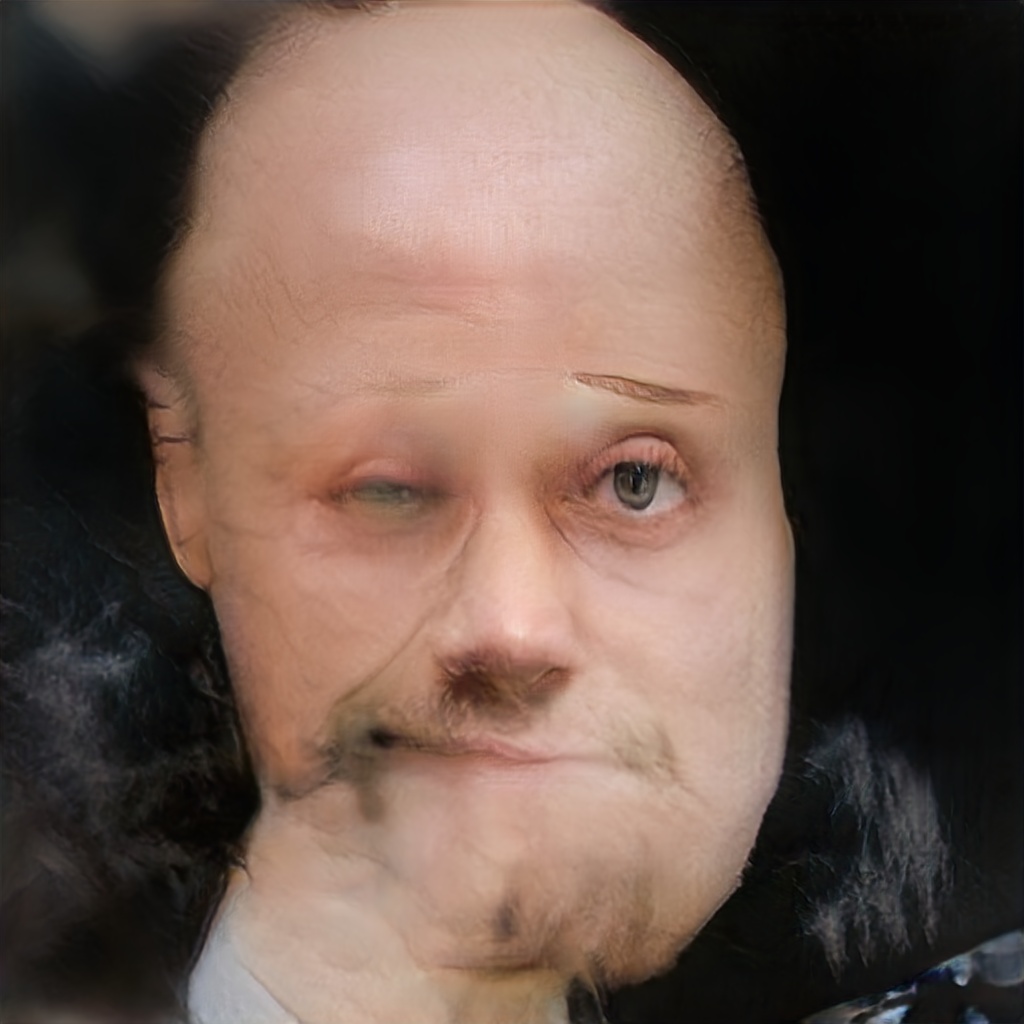}
  \end{subfigure}
  \begin{subfigure}[b]{0.32\linewidth}
    \centering\includegraphics[width=75pt]{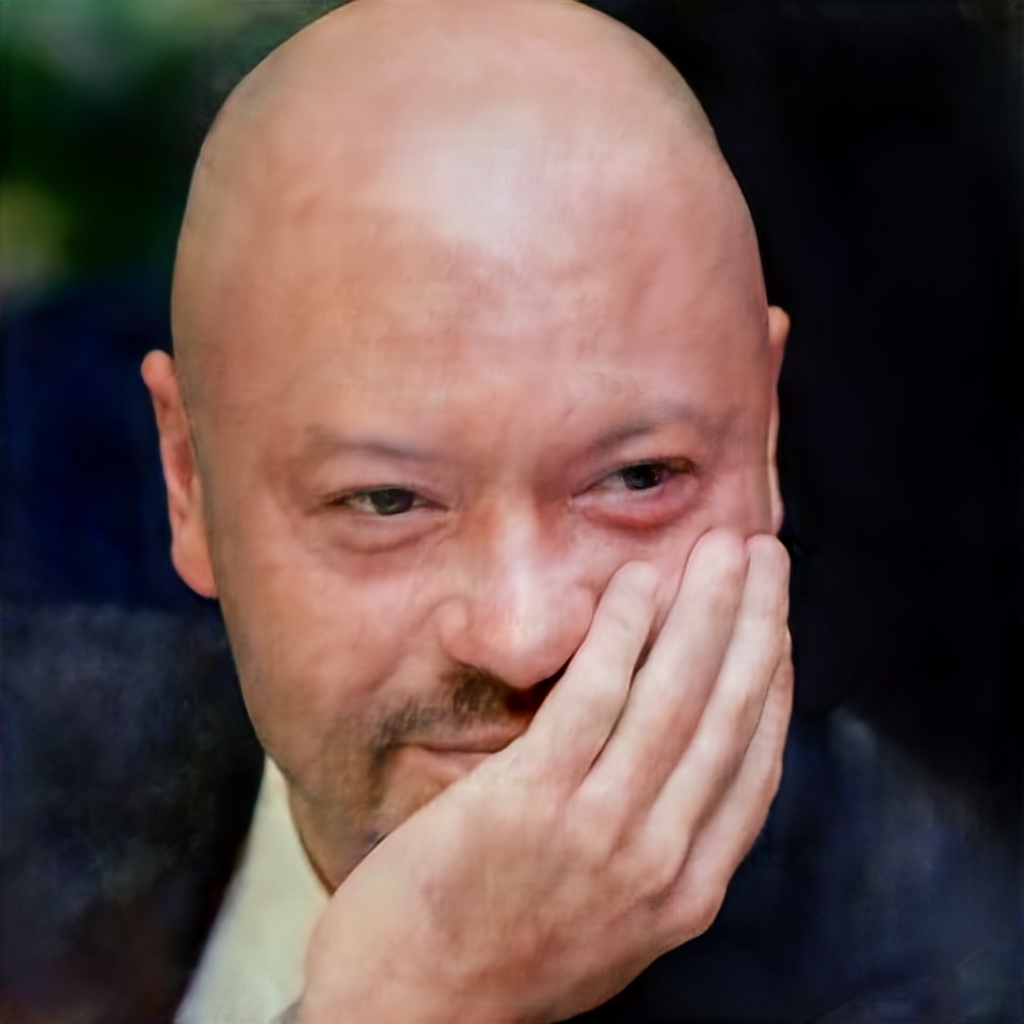}
  \end{subfigure}


  \caption{Compressed sensing with 30\% subsampled Fourier measurements and noise level of 10/255, for CelebA-HQ images. From left to right and top to bottom: original image, naive reconstruction (zero padding and IFFT), DIP, CSGM, and IAGAN. Note that CSGM and IAGAN use the PGGAN prior.}
\label{fig:cs_0p5_Fourier_visual_hq}
\end{figure}

\begin{figure}
  \centering
  \begin{subfigure}[b]{0.4\linewidth}
    \centering\includegraphics[width=52pt]{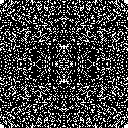}
  \end{subfigure}
  \begin{subfigure}[b]{0.4\linewidth}
    \centering\includegraphics[width=52pt]{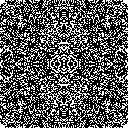}
  \end{subfigure}
  \caption{Binary masks for compressed sensing with 30\% (left) and 50\% (right) subsampled Fourier measurements.}
\label{fig:cs_masks}
\end{figure}

\begin{figure}[ht]
  \centering
  \begin{subfigure}[b]{0.32\linewidth}
    \centering\includegraphics[width=\linewidth]{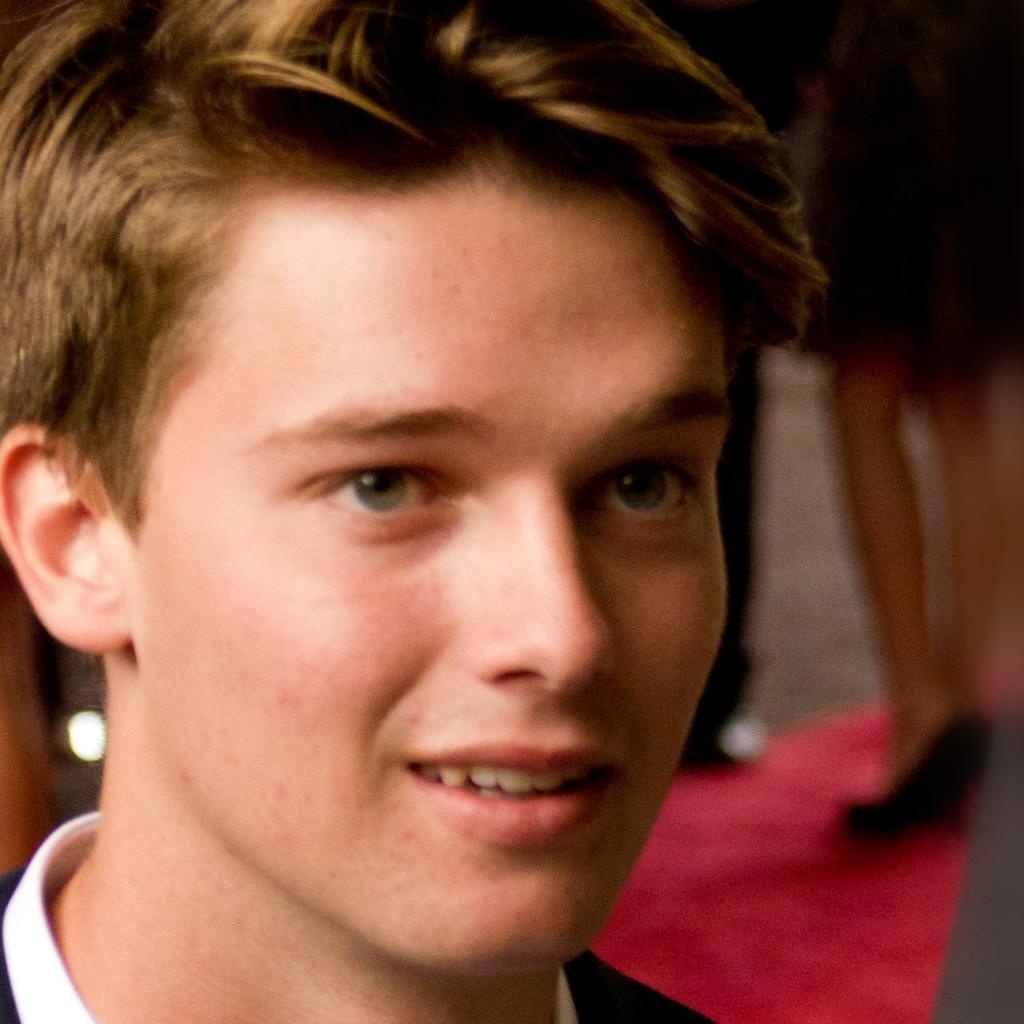}
  \end{subfigure}
  \begin{subfigure}[b]{0.32\linewidth}
    \centering\includegraphics[width=\linewidth]{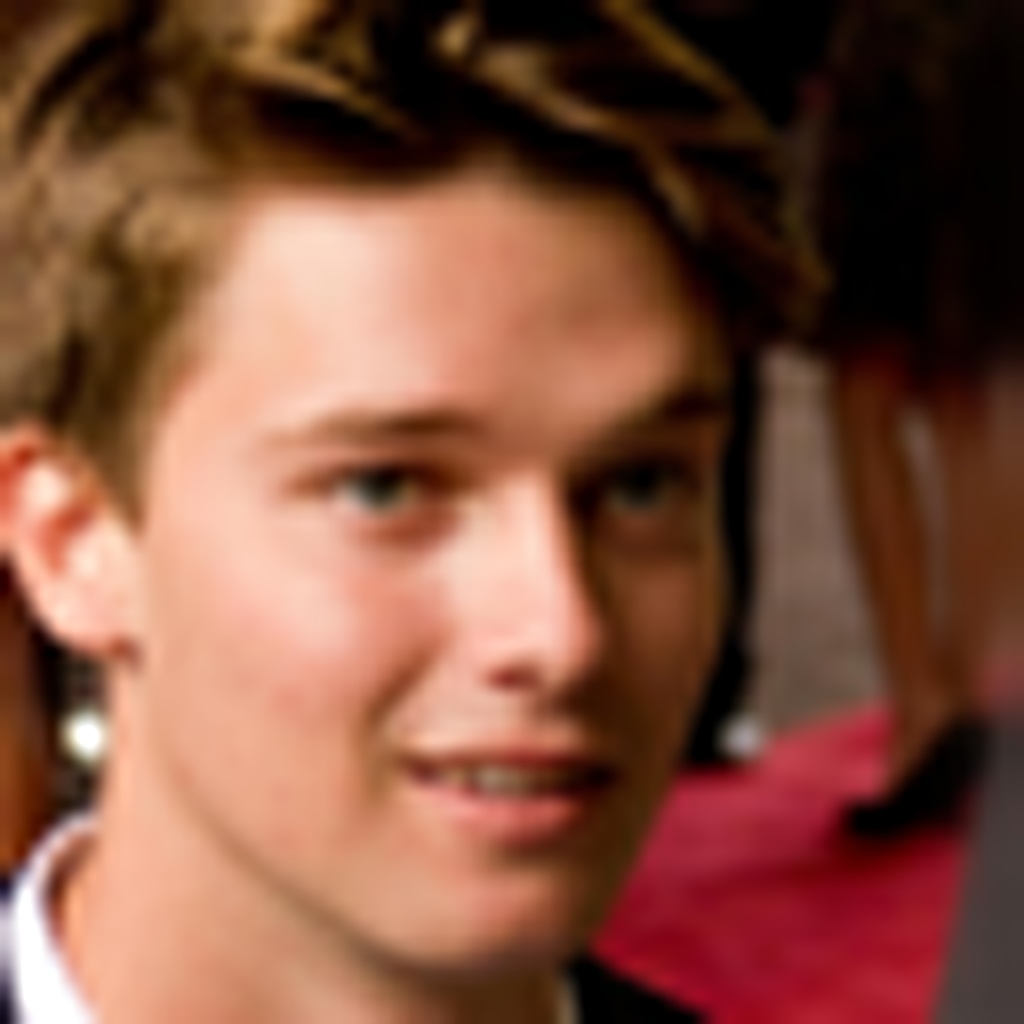}
  \end{subfigure}
  \\
  \begin{subfigure}[b]{0.32\linewidth}
    \centering\includegraphics[width=\linewidth]{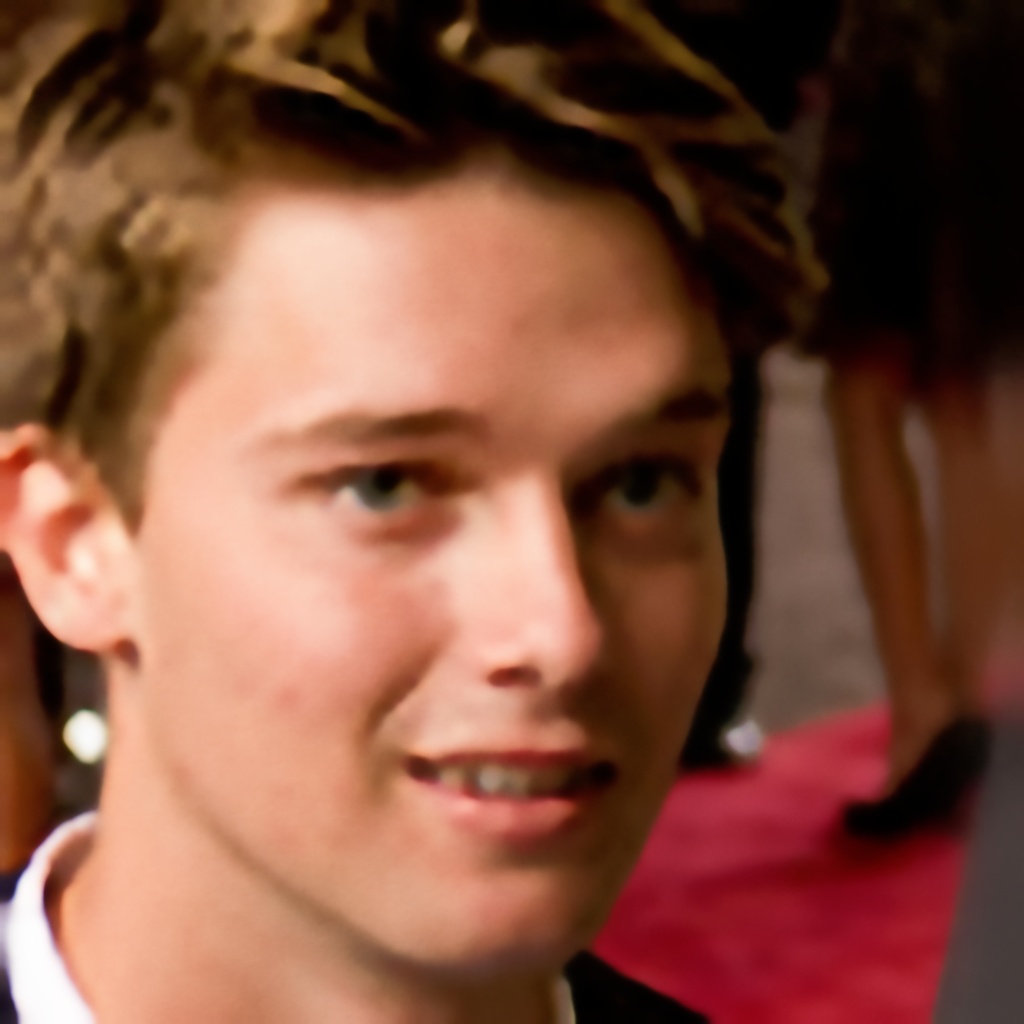}
  \end{subfigure}
  \begin{subfigure}[b]{0.32\linewidth}
    \centering\includegraphics[width=\linewidth]{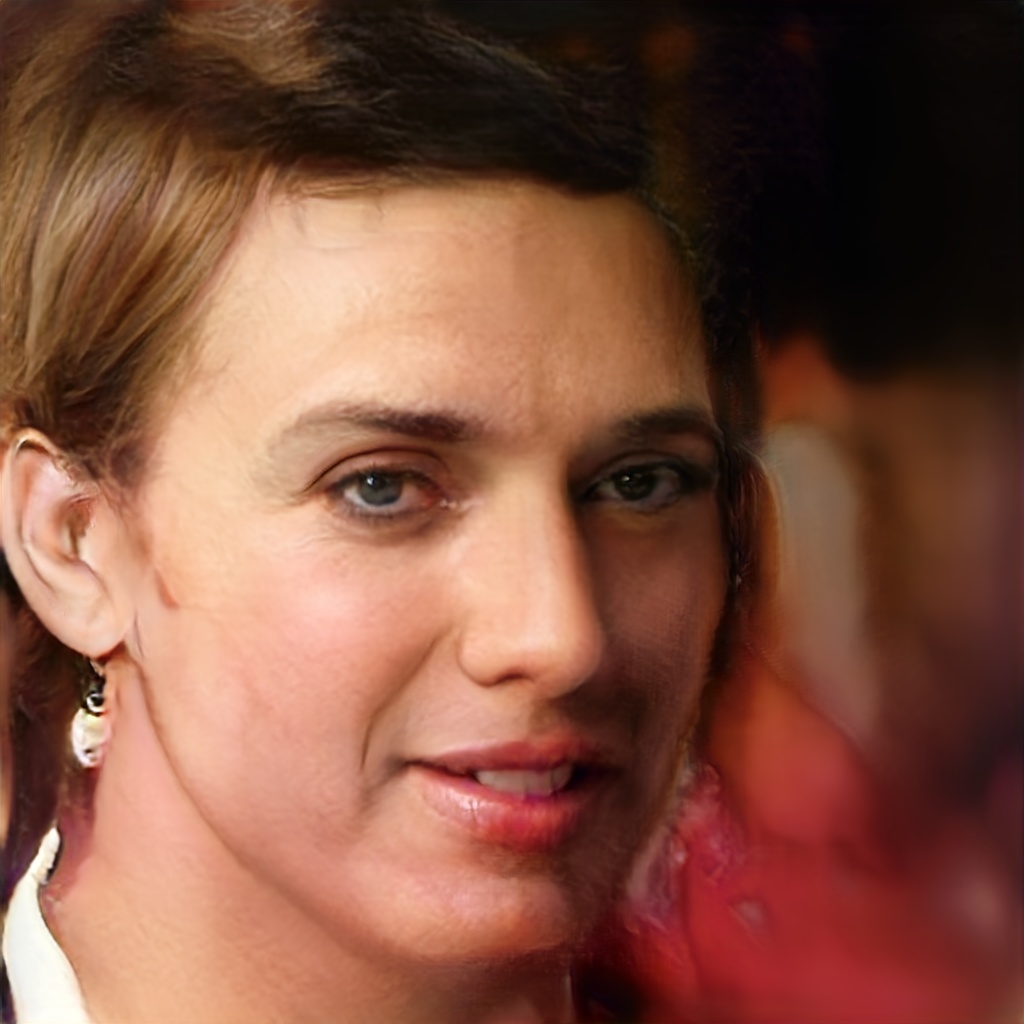}
  \end{subfigure}
  \begin{subfigure}[b]{0.32\linewidth}
    \centering\includegraphics[width=\linewidth]{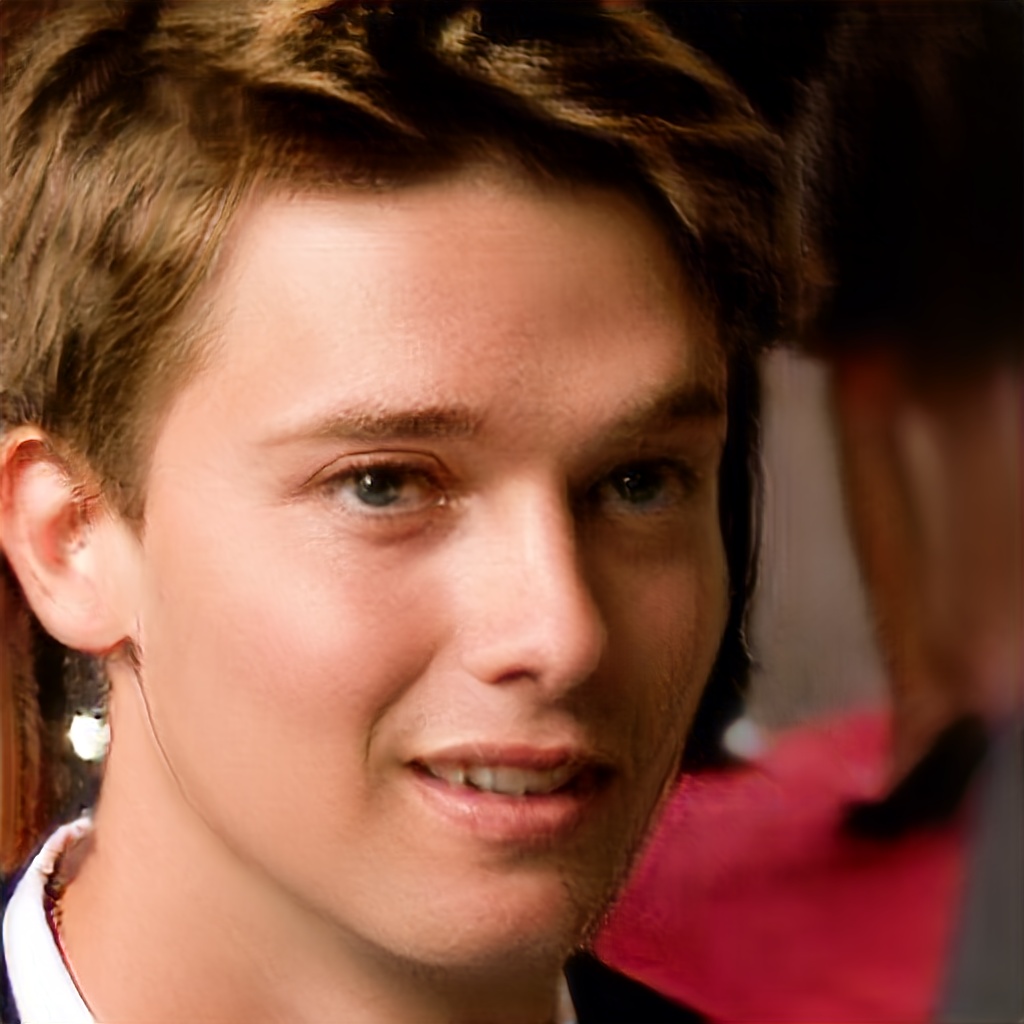}
  \end{subfigure}
  \\
  \begin{subfigure}[b]{0.32\linewidth}
    \centering\includegraphics[width=\linewidth]{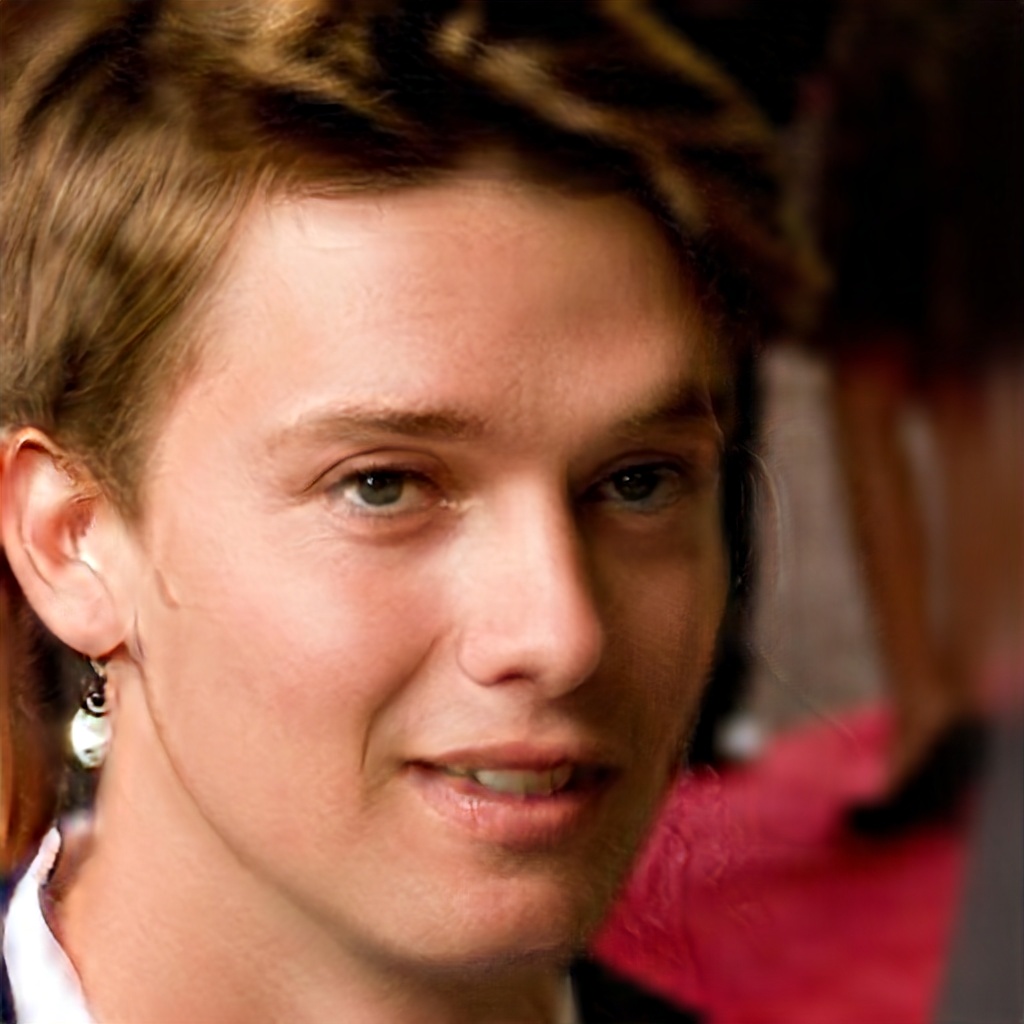}
  \end{subfigure}
  \begin{subfigure}[b]{0.32\linewidth}
    \centering\includegraphics[width=\linewidth]{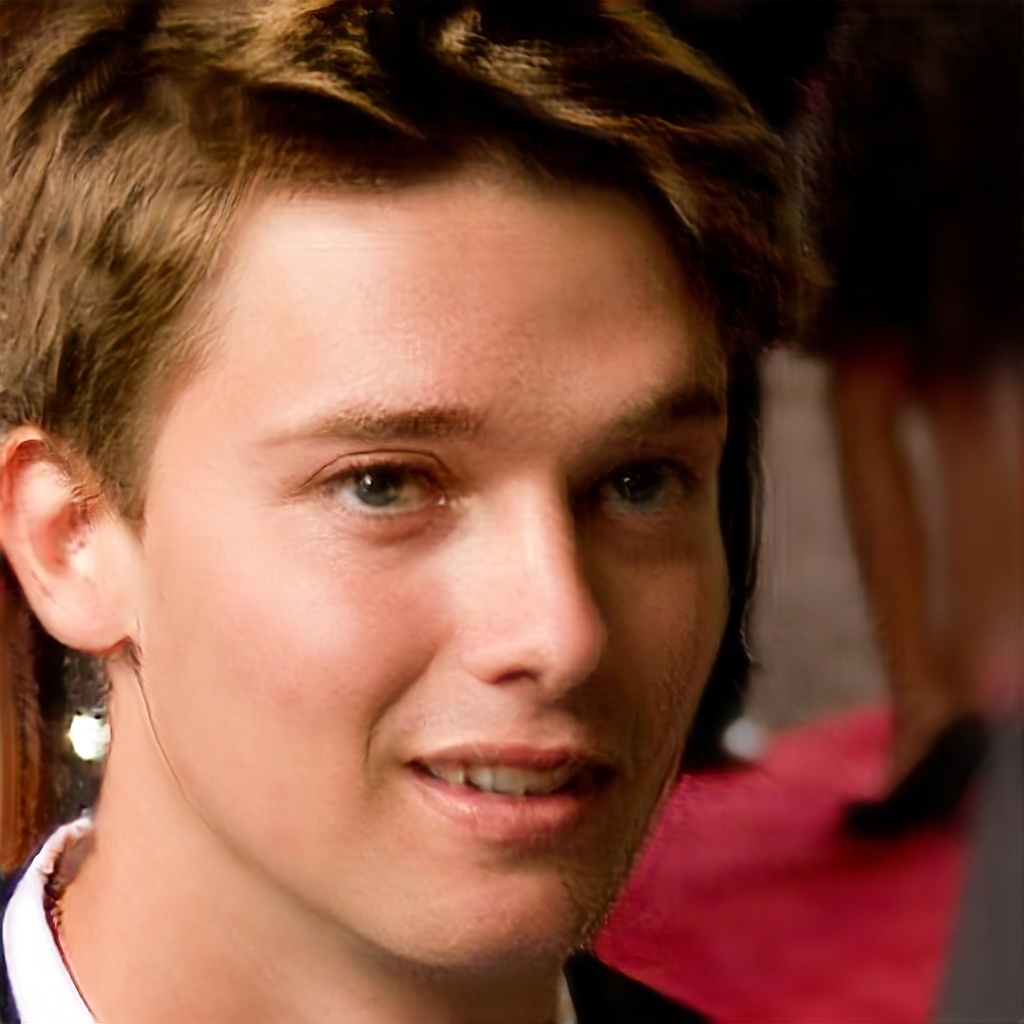}
  \end{subfigure}
  \caption{Super-resolution using PGGAN with scale factor of 16 without noise. From left to right and top to bottom: original image, bicubic upsampling, DIP, CSGM, IAGAN, CSGM-BP, and IAGAN-BP.}
\label{app:sr_16}
\end{figure}

\subsection{Compressed Sensing}

In the first experiment we demonstrate how the proposed IA and BP techniques significantly outperform or improve upon CSGM for a large range of compression ratios.
We consider noiseless compressed sensing using an $m \times n$ Gaussian 
matrix $\A$ with i.i.d. entries drawn from $A_{ij} \sim \mathcal{N}(0,1/m)$, similar to the experiments in \cite{bora2017compressed}. In this case, there is no efficient way to implement the operators $\A$ and $\A^T$. Therefore, we consider only the BEGAN that generates $128 \times 128$ images (i.e. $n=3 \times 128^2=49,152$), which are much smaller than those generated by PGGAN.

Figure \ref{fig:cs_mse_vs_m_visual} shows several visual results and
Figure \ref{fig:cs_mse_vs_m} presents the reconstruction
MSE of the different methods
as we change the number of measurements $m$ (i.e. we change the compression ratio $m/n$). The results are averages over 20 images from CelebA dataset.
It is clearly seen that IAGAN outperforms CSGM for all the values of $m$. Note that due to the limited representation capabilities of BEGAN (equivalently -- its mode collapse), CSGM performance reaches a plateau in a quite small value of $m$, contrary to IAGAN error that continues to decrease.
The back-projection strategy is shown to be very effective, as it makes sure that CSGM-BP is rescued from the plateau of CSGM.
The fact that IAGAN still has a very small error when the compression ratio is almost 1 follows from our small learning rates and early stopping, which have been found necessary for small values of $m$, where the null space of $\A$ is very large and it is important to avoid overriding the offline semantic information. However, this small error is often barely visible, as demonstrated by visual results in Figure \ref{fig:cs_mse_vs_m_visual}, and further decreases by the BP step of IAGAN-BP.

\begin{table}
\scriptsize
\renewcommand{\arraystretch}{1.3}
\caption{Compressed sensing with subsampled Fourier measurements. Reconstruction PSNR [dB] (left) and PS \cite{zhang2018unreasonable} (right), averaged over 100 images from CelebA and CelebA-HQ, for compression ratios 0.3 and 0.5, with noise level of 10/255.} \label{table:cs_fourier}
\centering
    \begin{tabular}{ | l | l | l | l | l | l |}
    \hline
    {\em CelebA}        & naive IFFT & DIP & CSGM & IAGAN \\ \hline
    CS ratio 0.3 & 19.23 / 0.540 & {\bfseries 25.96} / 0.139 &  20.12 / 0.246 &  25.50 / {\bfseries 0.092}  \\ \hline
    CS ratio 0.5 & 20.53 / 0.495 &  27.21 / 0.125 & 20.32 / 0.241 &  {\bfseries 27.59} / {\bfseries 0.066}  \\ \hline
    \end{tabular}
    \begin{tabular}{ | l | l | l | l | l | l |}
    \hline
    {\em CelebA-HQ}        & naive IFFT & DIP & CSGM & IAGAN \\ \hline
    CS ratio 0.3 & 19.65 / 0.625 & 24.97 / 0.566 &  21.38 / 0.520 & {\bfseries 25.80} / {\bfseries 0.429}  \\ \hline
    CS ratio 0.5 & 20.45 / 0.597 & 26.29 / 0.535 & 21.82 / 0.514 & {\bfseries 28.26} / {\bfseries 0.378} \\ \hline
    \end{tabular}
\end{table}

\begin{table*}
\footnotesize
\renewcommand{\arraystretch}{1.3}
\caption{Super-resolution with bicubic downscaling kernel. Reconstruction PSNR [dB] (left) and PS \cite{zhang2018unreasonable} (right), averaged over 100 images from CelebA and CelebA-HQ, for scale factors 4, 8 and 16, with no noise.} \label{table:super_res}
\centering
    \begin{tabular}{ | l | l | l | l | l | l | l | l |}
    \hline
    {\em CelebA} \hspace{4.8mm}       & Bicubic & DIP & CSGM & CSGM-BP & IAGAN & IAGAN-BP \\ \hline
    SR x4        & 26.50 / 0.165 & {\bfseries 27.35} / 0.159 & 20.51 / 0.235 & 26.44 / 0.165 &  27.16 / {\bfseries 0.092} & 27.14 / {\bfseries 0.092}  \\ \hline
    SR x8        & 22.39 / 0.212 & 23.45 / 0.339 & 20.23 / 0.240 & 22.71 / 0.212 &  23.49 / 0.158 & {\bfseries 23.53} / {\bfseries 0.157}  \\ \hline
    \end{tabular}
    \\
    \begin{tabular}{ | l | l | l | l | l | l | l | l |}
    \hline
    {\em CelebA-HQ}        & Bicubic & DIP & CSGM & CSGM-BP & IAGAN & IAGAN-BP \\ \hline
    SR x8        & 29.94 / 0.398 & {\bfseries 30.01} / 0.400 & 22.62 / 0.505 & 28.54 / 0.398 & 28.76 / 0.387 & 28.76 / {\bfseries 0.360}  \\ \hline
    SR x16       & 27.43 / 0.437 & {\bfseries 27.51} / 0.480 & 22.34 / 0.506 & 26.20 / 0.437 & 26.28 / 0.421 & 25.86 / {\bfseries 0.411}  \\ \hline
    \end{tabular}
\end{table*}

\begin{table}
\scriptsize
\renewcommand{\arraystretch}{1.3}
\caption{Super-resolution with bicubic downscaling kernel. Reconstruction PSNR [dB] (left) and PS \cite{zhang2018unreasonable} (right), averaged over 100 images from CelebA and CelebA-HQ, for scale factors 4, 8 and 16, with noise level of 10/255.} \label{table:super_res_noisy}
\centering
    \begin{tabular}{ | l | l | l | l | l | l |}
    \hline
    {\em CelebA} \hspace{3.8mm}       & Bicubic & DIP & CSGM & IAGAN  \\ \hline
    SR x4        & 24.72 / 0.432 & 24.19 / 0.280 & 20.57 / 0.238 &  {\bfseries 25.54} / {\bfseries 0.133}  \\ \hline
    SR x8        & 21.65 / 0.660 & 21.22 / 0.513 & 20.22 / {\bfseries 0.243} &  {\bfseries 21.72} / {\bfseries 0.243}  \\ \hline
    \end{tabular}
    \begin{tabular}{ | l | l | l | l | l | l |}
    \hline
    {\em CelebA-HQ}        & Bicubic & DIP & CSGM & IAGAN \\ \hline
    SR x8        & 26.31 / 0.801 & {\bfseries 27.61} / 0.430 & 21.60 / 0.519 & 26.30 / {\bfseries 0.421}  \\ \hline
    SR x16       & {\bfseries 25.02} / 0.781 & 24.20 / 0.669 & 21.31 / 0.516 & 24.73 / {\bfseries 0.455}  \\ \hline
    \end{tabular}
\end{table}

In order to examine our proposed IA strategy for the larger model PGGAN as well, we turn to use a different measurement operator $\A$ which can be applied efficiently -- the subsampled Fourier transform. This acquisition model is also more common in practice, e.g. in sparse MRI \cite{lustig2007sparse}.
We consider scenarios with compression ratios of 0.3 and 0.5, and noise level of 10/255 (due to the noise we do not apply the BP post-processing).
The PSNR and PS results (averaged on 100 images from each dataset) are given in Table \ref{table:cs_fourier}, and several visual examples are shown in Figure \ref{fig:cs_0p5_Fourier_visual_hq}.
In Figure \ref{fig:cs_masks} we
present the binary masks used for 30\% and 50\% Fourier domain sampling of $128 \times 128$ images in CelebA. The binary masks that have been used for CelebA-HQ have similar forms.

\begin{figure}
  \centering
  \begin{subfigure}[b]{0.23\linewidth}
    \centering\includegraphics[width=55pt]{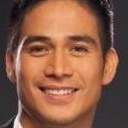}
  \end{subfigure}
  \begin{subfigure}[b]{0.23\linewidth}
    \centering\includegraphics[width=55pt]{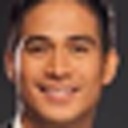}
  \end{subfigure}
  \begin{subfigure}[b]{0.23\linewidth}
    \centering\includegraphics[width=55pt]{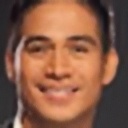}
  \end{subfigure}
  \\
  \begin{subfigure}[b]{0.23\linewidth}
    \centering\includegraphics[width=55pt]{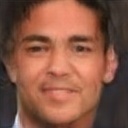}
  \end{subfigure}
  \begin{subfigure}[b]{0.23\linewidth}
    \centering\includegraphics[width=55pt]{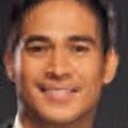}
  \end{subfigure}
  \begin{subfigure}[b]{0.23\linewidth}
    \centering\includegraphics[width=55pt]{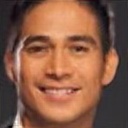}
  \end{subfigure}
  \begin{subfigure}[b]{0.23\linewidth}
    \centering\includegraphics[width=55pt]{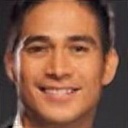}
  \end{subfigure}

  \caption{Super-resolution with scale factor 4, bicubic kernel, and no noise, for CelebA images. From left to right and top to bottom: original image, bicubic upsampling, DIP, CSGM, CSGM-BP, IAGAN, and IAGAN-BP. Note that CSGM and IAGAN use the BEGAN prior.}
\label{fig:SR_visual}
\end{figure}

The unsatisfactory results obtained by CSGM clearly demonstrate the limited capabilities of both BEGAN and PGGAN for
reconstruction:
Despite the fact that both of them can generate very nice samples \cite{berthelot2017began,karras2017progressive},
they typically cannot represent well an image that fits the given observations $\y$. This is resolved by our image-adaptive approach. 
For CelebA dataset DIP has competitive PSNR with our IAGAN. However, both the qualitative examples and the PS (perceptual similarity) measure agree that IAGAN results are much more pleasing.
For CelebA-HQ dataset our IAGAN clearly outperforms the other methods.

\textbf{Inference run-time.} Since IAGAN performs a quite small number of ADAM iterations to jointly optimize $\z$ and $\btheta$ (the generator's parameters), it requires only a small additional time compared to CSGM. Yet, both methods are much faster than DIP, which trains from scratch a large CNN at test-time. For example, for compression ratio of 0.5, using NVIDIA RTX 2080ti GPU we got the following per image run-time:
for CelebA: DIP $\sim$100s, CSGM $\sim$30s, and IAGAN $\sim$35s; and
for CelebA-HQ: DIP $\sim$1400s, CSGM $\sim$120s, and IAGAN $\sim$140s.
The same behavior, i.e. CSGM and IAGAN are much faster than DIP, holds throughout the experiments in the paper (e.g. also for the super-resolution task).


\subsection{Super-Resolution}

We turn to examine the super-resolution task, for $\A$ which is a composite operator of blurring with a bicubic anti-aliasing kernel followed by down-sampling.
For BEGAN we use super-resolution scale factors of 4 and 8, and for PGGAN we use scale factors of 8 and 16. We check a noiseless scenario and a scenario with noise level of 10/255.
For the noiseless scenario we also examine the GAN-based recovery after a BP post-processing, which
can be computed efficiently, because $\A^\dagger$ can be implemented by bicubic upsampling.
The PSNR and PS results (averaged on 100 images from each dataset) of the different methods are given in Tables \ref{table:super_res} and \ref{table:super_res_noisy}, and several visual examples are shown in Figure \ref{fig:SR_visual}.

Once again, the results of the plain CSGM are not satisfying.
Due to the limited representation capabilities of BEGAN and PGGAN, the recovered faces
look very different than the ones in the test images.
The BP post-processing is very effective in reducing CSGM representation error when the noise level is minimal.
For our IAGAN approach, the BP step is less effective (i.e. IAGAN and IAGAN-BP have similar recoveries), which implies that the "soft-compliance" of IAGAN obtains similar results as the "hard-compliance" of the BP in the row space of $\A$.
In the noiseless case, DIP often obtains better PSNR than IAGAN. However, as observed in the compressed sensing experiments, both the visual examples and the PS (perceptual similarity) measure agree that IAGAN results are much more pleasing and sharper, in both noisy and noiseless scenarios.
A similar tradeoff between distortion and perception has been recently investigated by \citet{blau2018perception}.
Their work supports the observation that the balance between fitting the measurements and preserving the generative prior, which is the core of our IAGAN approach, may limit the achievable PSNR in some cases but significantly improves the perceptual quality.

\begin{figure}[ht]
  \centering
  \begin{subfigure}[b]{0.23\linewidth}
    \centering\includegraphics[width=55pt]{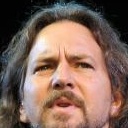}
  \end{subfigure}
  \begin{subfigure}[b]{0.23\linewidth}
    \centering\includegraphics[width=55pt]{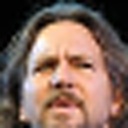}
  \end{subfigure}
  \begin{subfigure}[b]{0.23\linewidth}
    \centering\includegraphics[width=55pt]{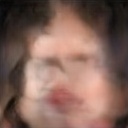}
  \end{subfigure}
  \begin{subfigure}[b]{0.23\linewidth}
    \centering\includegraphics[width=55pt]{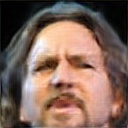}
  \end{subfigure}
  \\
  \begin{subfigure}[b]{0.23\linewidth}
    \centering\includegraphics[width=55pt]{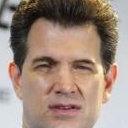}
  \end{subfigure}
  \begin{subfigure}[b]{0.23\linewidth}
    \centering\includegraphics[width=55pt]{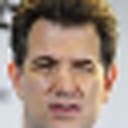}
  \end{subfigure}
  \begin{subfigure}[b]{0.23\linewidth}
    \centering\includegraphics[width=55pt]{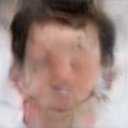}
  \end{subfigure}
  \begin{subfigure}[b]{0.23\linewidth}
    \centering\includegraphics[width=55pt]{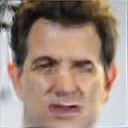}
  \end{subfigure}
  \caption{Super-resolution of misaligned images with bicubic kernel and scale factor 4 using BEGAN. Left to right: original image, bicubic upsampling, CSGM, and IAGAN.}
\label{fig:SR_misalign}
\end{figure}

We finish this section with an extreme demonstration of mode collapse. In this scenario we use the BEGAN model to
super-resolve images with scale factor of 4. Yet, this time the images are slightly misaligned --- they are vertically translated by a few pixels.
The PSNR[dB] / PS results (averaged on 100 CelebA images) are 19.18 / 0.374 for CSGM and 26.73 / 0.127 for IAGAN. Several visual results are shown in Figure \ref{fig:SR_misalign}.
The CSGM is highly susceptible to the poor capabilities of BEGAN in this case, while our IAGAN strategy is quite robust.

\subsection{Deblurring}

We briefly demonstrate that the advantage of IAGAN carries to more inverse problems by examining a deblurring scenario, where the operator $\A$ represents blurring with a $9 \times 9$ uniform filter, and the noise level is 10/255 (so we do not apply the BP post-processing).
The PSNR and PS results (averaged on 100 images from each dataset) of DIP, CSGM, and IAGAN
are given in Table \ref{table:deblur_noisy}, and several visual examples are presented in
Figure \ref{fig:deb_visual_hq}.

Similarly to the previous experiments, the proposed IAGAN often exhibits the best PSNR and consistently exhibits the best perceptual quality.

\begin{table}
\scriptsize
\renewcommand{\arraystretch}{1.3}
\caption{Deblurring with $9 \times 9$ uniform filter and noise level of 10/255. Reconstruction PSNR [dB] (left) and PS \cite{zhang2018unreasonable} (right), averaged over 100 images from CelebA and CelebA-HQ.} \label{table:deblur_noisy}
\centering
    \begin{tabular}{ | l | l | l | l | l | l |}
    \hline
    {\em CelebA} \hspace{3.8mm}       & Blurred & DIP & CSGM & IAGAN  \\ \hline
    Deb U(9x9)        &  22.21 / 0.490 & 25.63 / 0.203 & 20.37 / 0.241 &  {\bfseries 26.15} / {\bfseries 0.110}  \\ \hline
    \end{tabular}
    \begin{tabular}{ | l | l | l | l | l | l |}
    \hline
    {\em CelebA-HQ}        & Blurred & DIP & CSGM & IAGAN \\ \hline
    Deb U(9x9)        & 25.80 / 0.622 & {\bfseries 28.28} / 0.458 & 21.62 / 0.507 & 28.25 / {\bfseries 0.388}  \\ \hline
    \end{tabular}
\end{table}

\begin{figure}[t]
  \centering
  \begin{subfigure}[b]{0.32\linewidth}
    \centering\includegraphics[width=75pt]{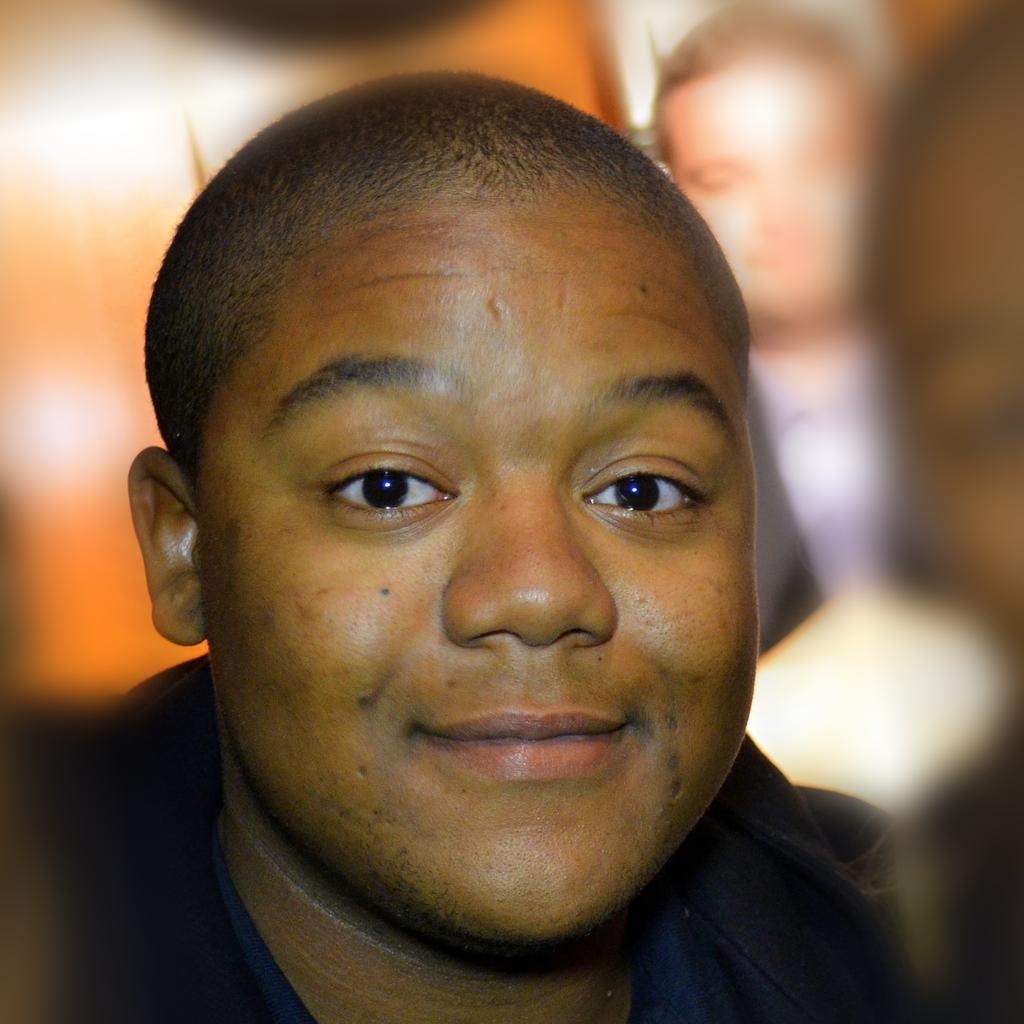}
  \end{subfigure}
  \begin{subfigure}[b]{0.32\linewidth}
    \centering\includegraphics[width=75pt]{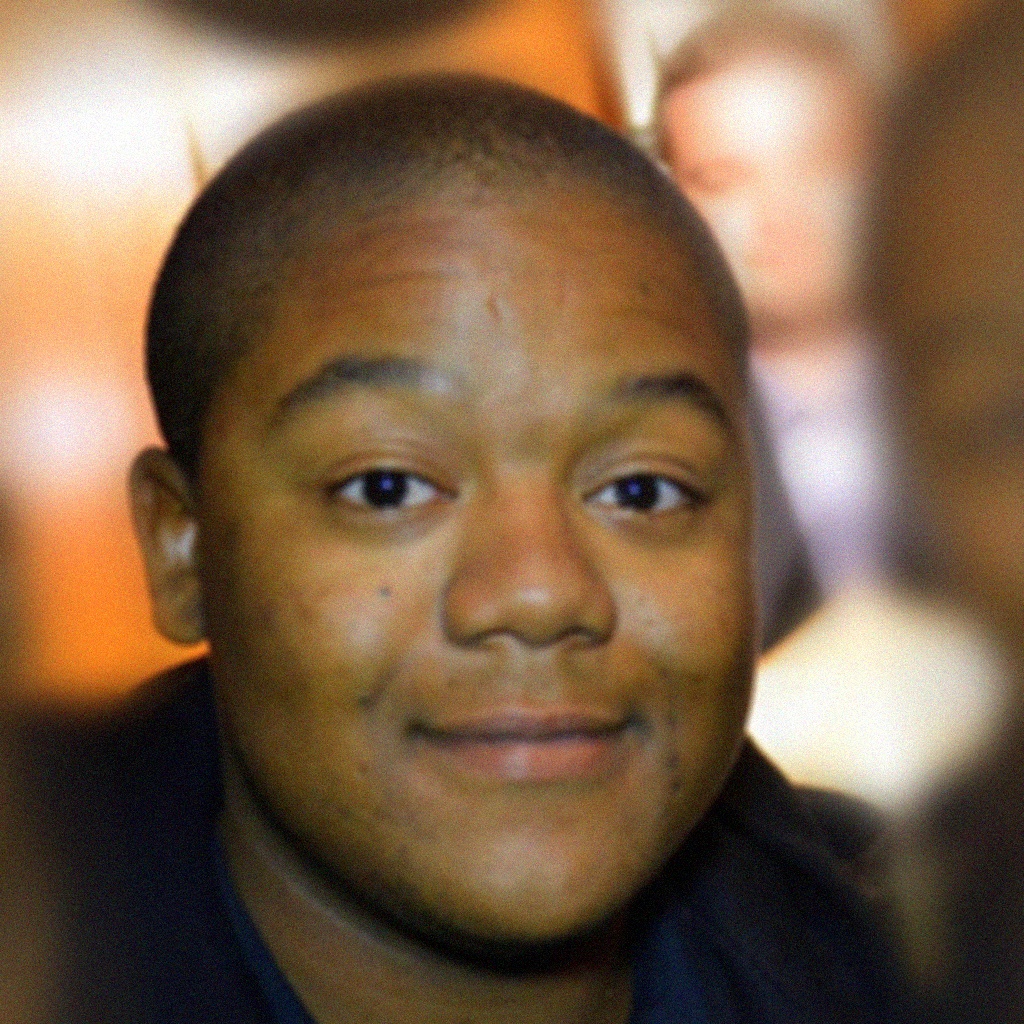}
  \end{subfigure}
  \\
    \begin{subfigure}[b]{0.32\linewidth}
    \centering\includegraphics[width=75pt]{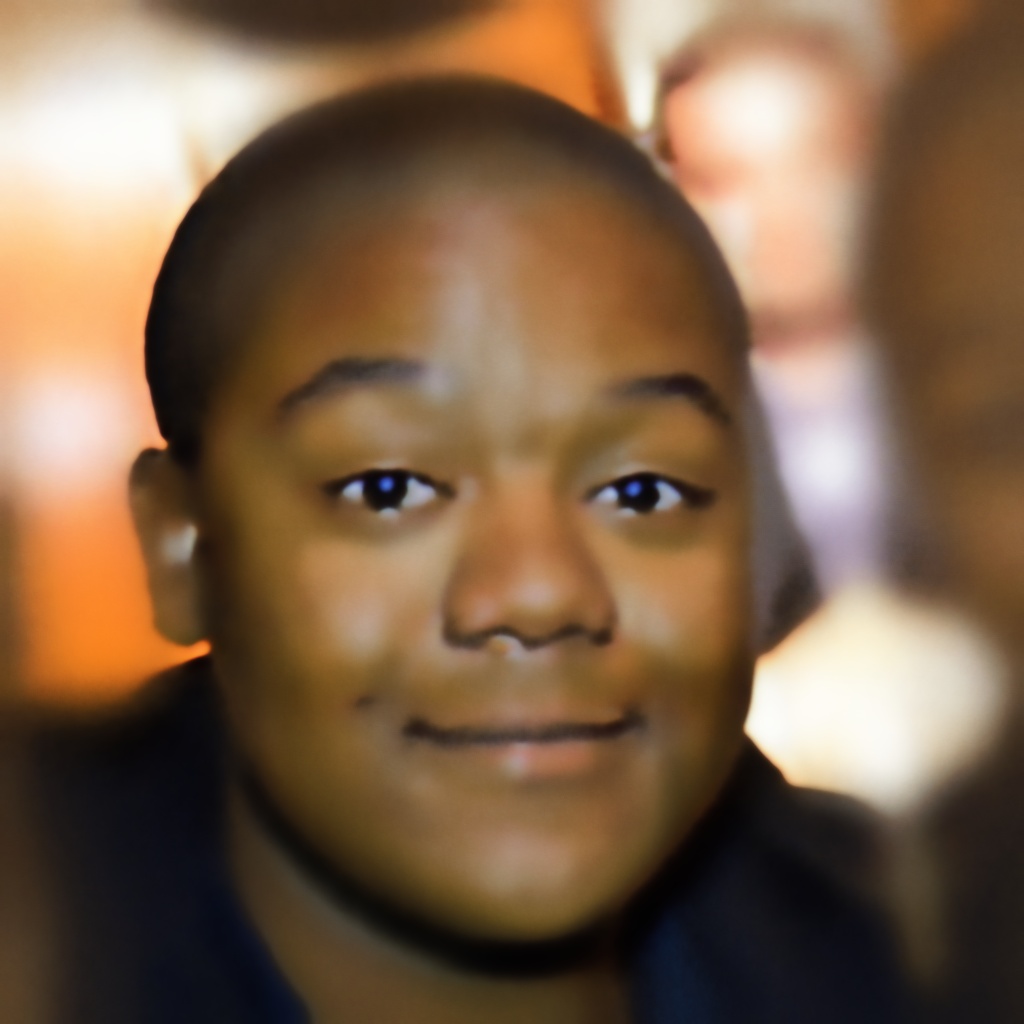}
  \end{subfigure}
  \begin{subfigure}[b]{0.32\linewidth}
    \centering\includegraphics[width=75pt]{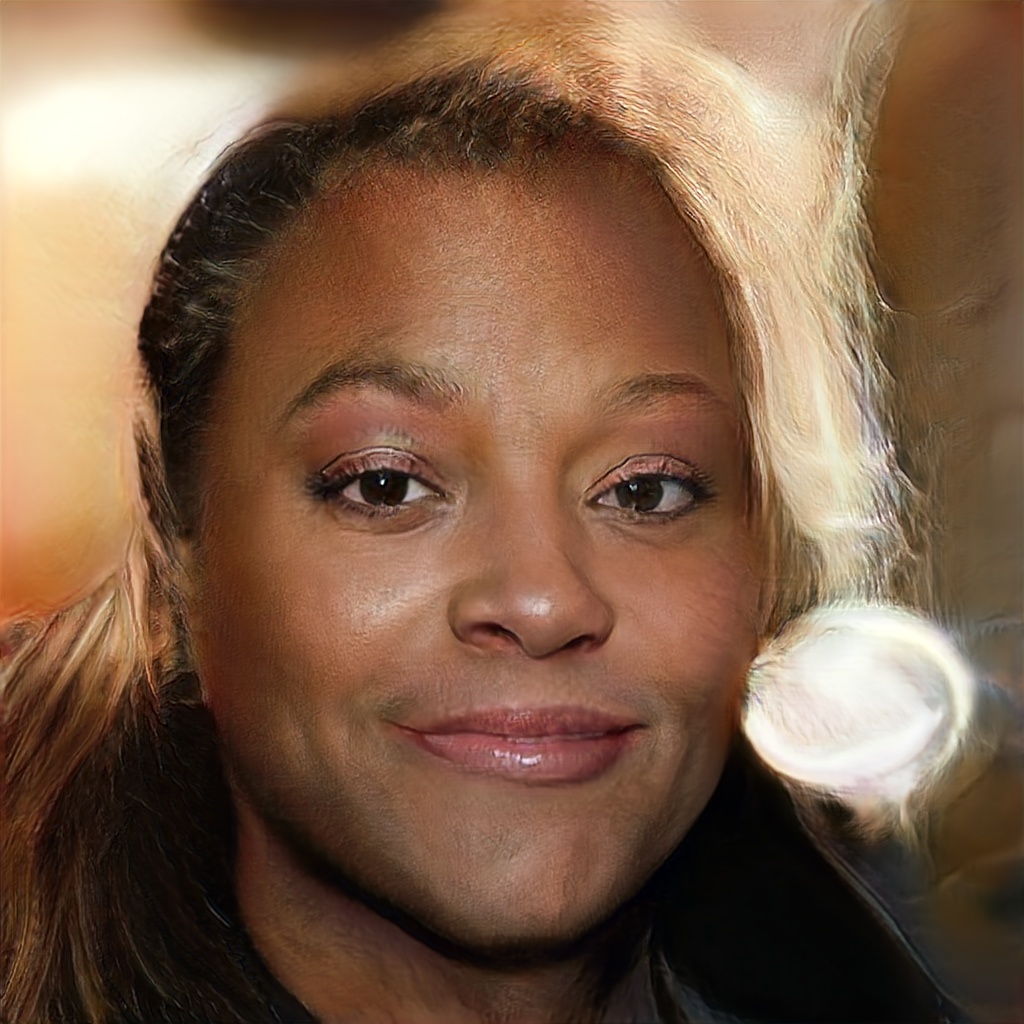}
  \end{subfigure}
  \begin{subfigure}[b]{0.32\linewidth}
    \centering\includegraphics[width=75pt]{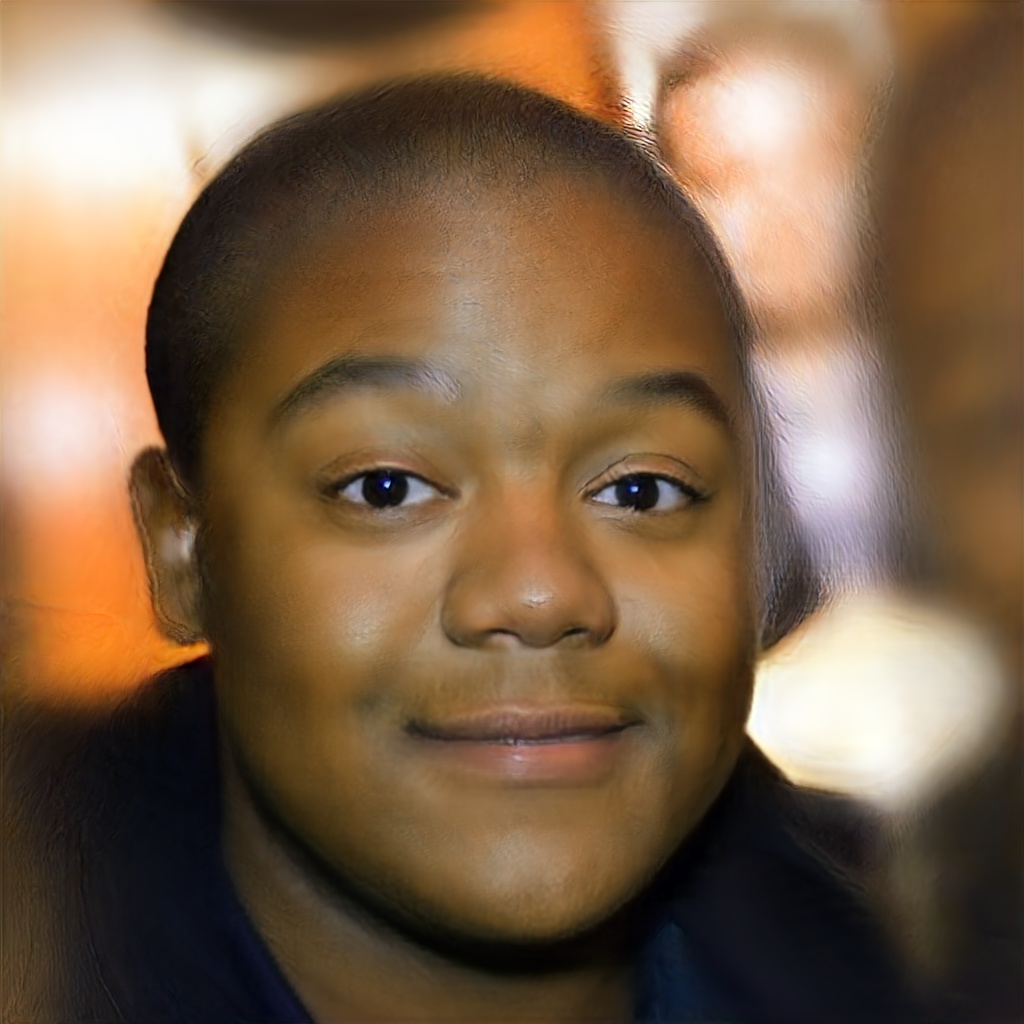}
  \end{subfigure}

\vspace{3mm}

  \begin{subfigure}[b]{0.32\linewidth}
    \centering\includegraphics[width=75pt]{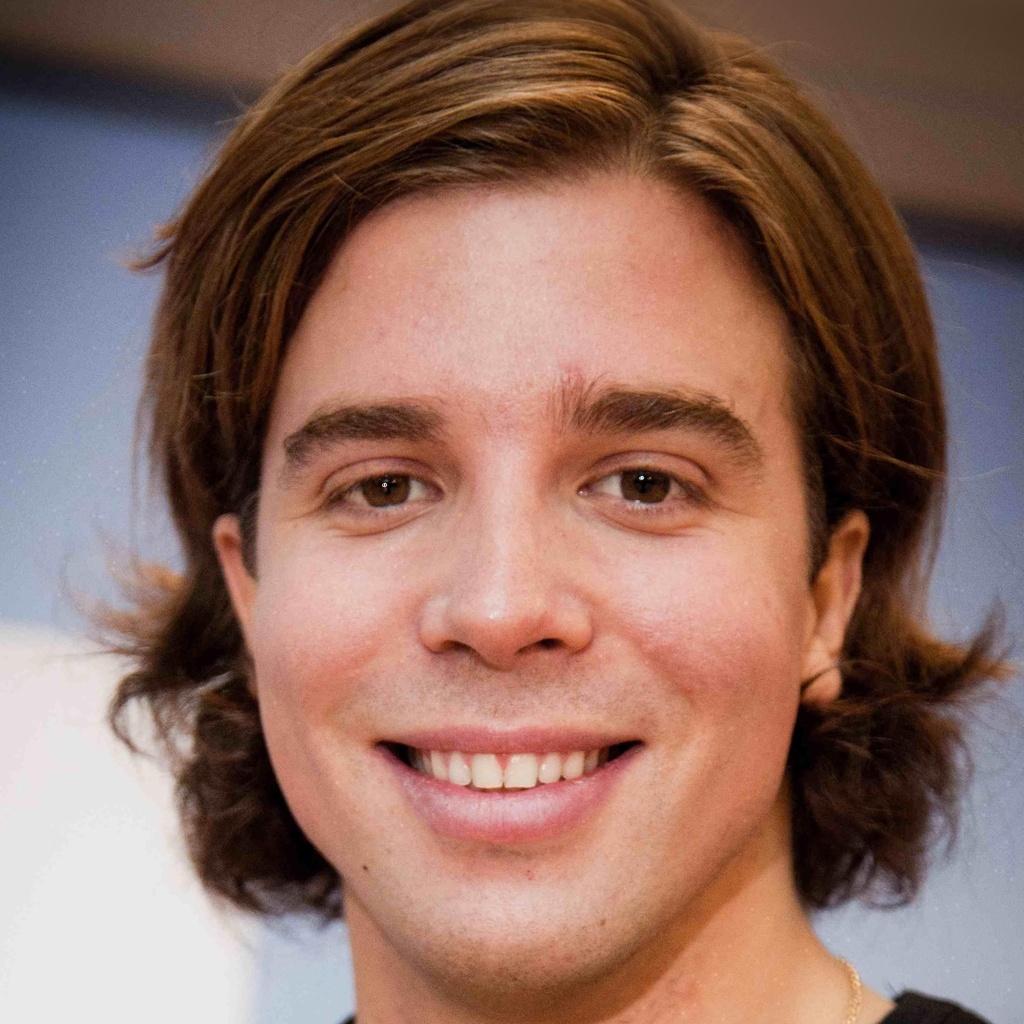}
  \end{subfigure}
  \begin{subfigure}[b]{0.32\linewidth}
    \centering\includegraphics[width=75pt]{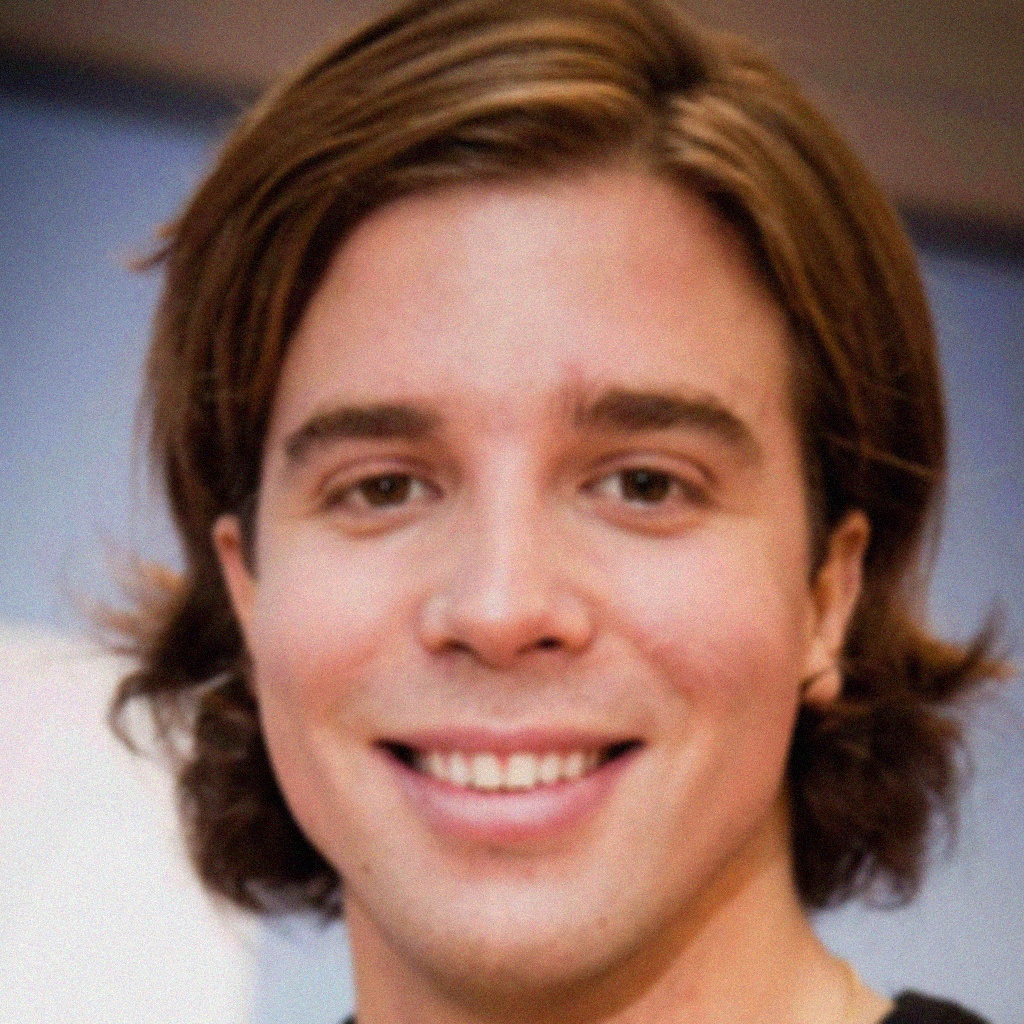}
  \end{subfigure}
  \\
  \begin{subfigure}[b]{0.32\linewidth}
    \centering\includegraphics[width=75pt]{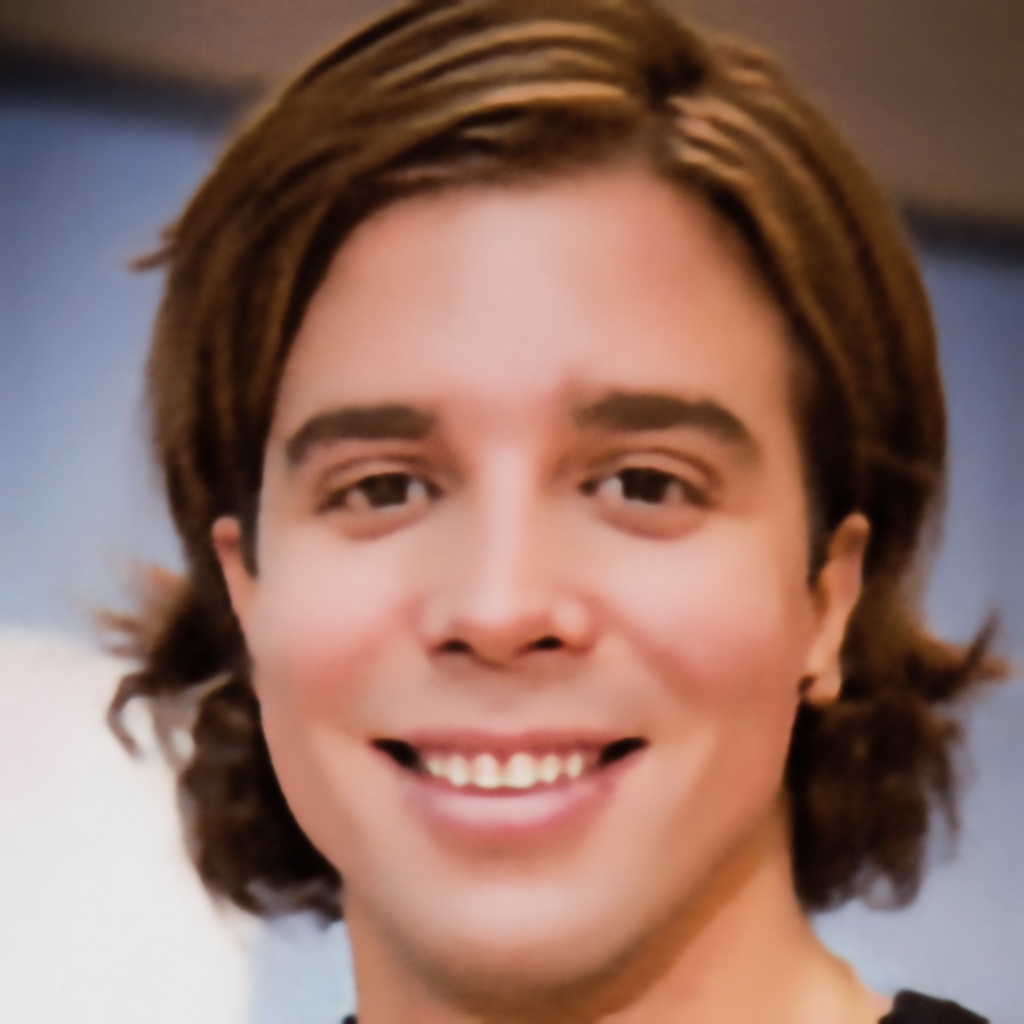}
  \end{subfigure}
  \begin{subfigure}[b]{0.32\linewidth}
    \centering\includegraphics[width=75pt]{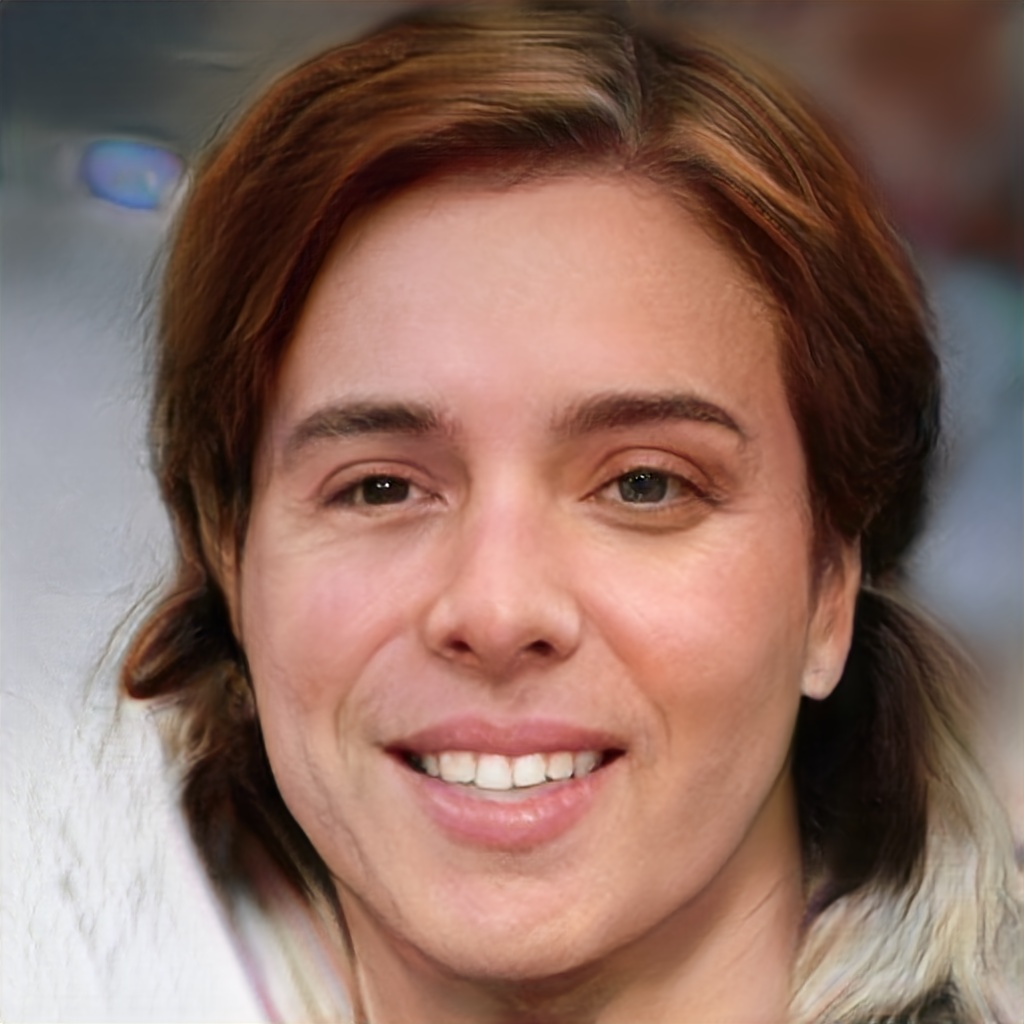}
  \end{subfigure}
  \begin{subfigure}[b]{0.32\linewidth}
    \centering\includegraphics[width=75pt]{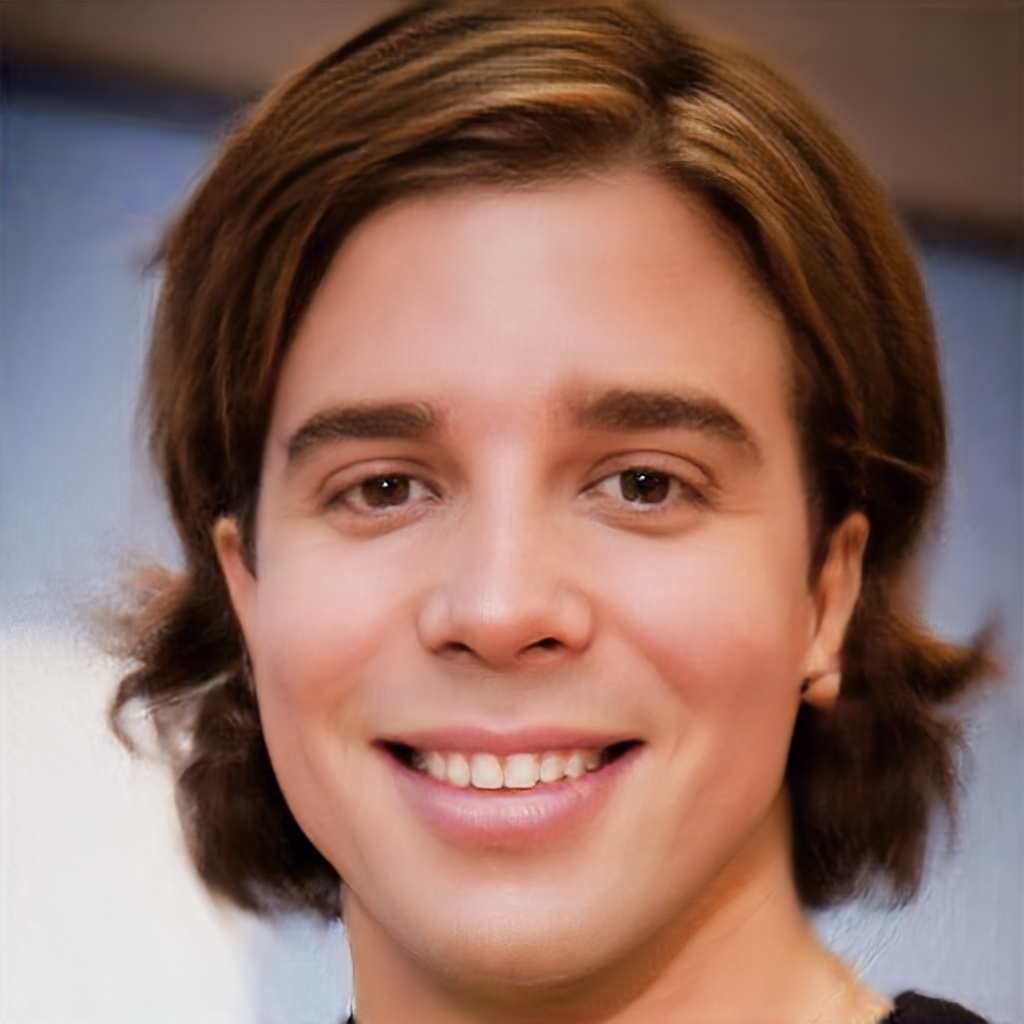}
  \end{subfigure}
  \caption{Deblurring with $9 \times 9$ uniform filter and noise level of 10/255, for CelebA-HQ images. From left to right and top to bottom (in each group): original image, blurred and noisy image, DIP, CSGM, and IAGAN. Note that CSGM and IAGAN use the PGGAN prior.}
\label{fig:deb_visual_hq}
\end{figure}

\section{Conclusion}

In this work we considered the usage of generative models for solving imaging inverse problems.
The main deficiency in such applications is the limited representation capabilities of the generators, which unfortunately do not capture the full distribution for complex classes of images.
We suggested two strategies for mitigating this problem. One technique is a post-processing back-projection step, which is applicable at low noise level, that essentially eliminates the component of the generator's representation error that resides in the row space of the measurement matrix. The second technique, which is our main contribution, is an image adaptive approach, termed IAGAN, that improves the generator capability to represent the {\em specific} test image. This method can improve also the restoration in the null space of the measurement matrix. One can also use the two strategies together.
Experiments on compressed sensing and super-resolution tasks demonstrated that our strategies, especially the image-adaptive approach, yield significantly improved reconstructions, which are both more accurate and perceptually pleasing than other alternatives.


\subsubsection*{Acknowledgments}


 This work is partially supported by the NSF-BSF grant (No. 2017729) and the European research council (ERC StG 757497 PI Giryes).

\bibliographystyle{aaai}
\bibliography{aaai20_TeX}

\end{document}